\documentclass{article}
\usepackage{graphicx}
\usepackage{amsmath}
\usepackage{amssymb}
\usepackage{amsthm}
\usepackage{multicol}
\usepackage{multirow}
\usepackage{subcaption}
\usepackage{url}
\usepackage[mathscr]{euscript}
\usepackage{array}
\usepackage{float}
\usepackage{xspace}
\usepackage{comment}
\usepackage{tabularx}
\usepackage{todonotes}
\usepackage{bm}
\usepackage{enumitem}
\usepackage{hyperref}
\usepackage{tikz}
\usepackage{thmtools}
\usepackage{mathrsfs}
\usepackage{thm-restate}
\usepackage{stmaryrd,nicefrac,mathtools}

\usepackage{amsthm,amsmath,amssymb,amsfonts,multicol}
\usepackage{enumitem,caption}

\newcommand{\tuple}[1]{{\langle #1 \rangle}}
\newcommand{\ps}{\Diamond^*}
\newcommand{\domai}[1]{W^{#1}}
\newcommand{\nec}{\Box^*}

\newcommand{\Model}{{\mathcal M}}
\newcommand{\Frame}{{\mathcal F}}

\newcommand{\ax}[1]{{\rm #1}}
\newcommand{\val}[1]{\|#1\|}
\newcommand{\peq}{\preccurlyeq}
\newcommand{\seq}{\succcurlyeq}

\newcommand{\R}{\mathrel R}

\newenvironment{proofidea}{%
    \par\noindent{\it Proof idea.}\hspace{0.1em}}%
    {\hfill$\square$\par\vspace{1ex}}

\usepackage{tikz}
\usetikzlibrary{backgrounds}
\usetikzlibrary{arrows}
\usetikzlibrary{shapes,shapes.geometric,shapes.misc}
\usetikzlibrary{patterns}

\tikzstyle{tikzfig}=[baseline=-0.25em,scale=0.5]

\pgfkeys{/tikz/tikzit fill/.initial=0}
\pgfkeys{/tikz/tikzit draw/.initial=0}
\pgfkeys{/tikz/tikzit shape/.initial=0}
\pgfkeys{/tikz/tikzit category/.initial=0}

\pgfdeclarelayer{edgelayer}
\pgfdeclarelayer{nodelayer}
\pgfdeclarelayer{fillbacklayer}
\pgfdeclarelayer{filltoplayer}
\pgfsetlayers{background,edgelayer,fillbacklayer,filltoplayer,nodelayer,main}

\tikzstyle{none}=[inner sep=0mm]

\tikzstyle{every loop}=[]

\tikzset{
    root/.style={
        circle,                
        draw=black,            
        fill=white,            
        minimum size=6mm,      
        inner sep=1pt,         
        font=\small            
    }
}

\tikzset{
    sroot/.style={
        circle,                
        draw=black,            
        fill=white,            
        minimum size=2mm,      
        inner sep=1pt,         
        font=\small            
    }
}

 \tikzstyle{edgeto}=[->]

\newtheorem{theorem}{Theorem}[section]

\newtheorem{definition}[theorem]{Definition}
\newtheorem{lemma}[theorem]{Lemma}

\newtheorem{corollary}[theorem]{Corollary}
\newtheorem{proposition}[theorem]{Proposition}

\newtheorem{remark}[theorem]{Remark}

\newcommand{\keywords}[1]{\par\addvspace{1em}\noindent\textbf{Keywords: }#1}

\begin{document}

\title{The Complexity of the Constructive Master Modality}

\author{Sofía Santiago-Fernández\thanks{\url{sofia.santiago@ub.edu}} \and David Fernández-Duque\thanks{\url{fernandez-duque@ub.edu}} \and Joost J. Joosten\thanks{\url{jjoosten@ub.edu}}}

\date{}

\maketitle

\begin{abstract}
We introduce the semantically-defined constructive master-modality logics $\sf CK^*$ and ${\sf WK}^*$, respectively extending the basic constructive modal logic $\sf CK$ \cite{Bellin2001ExtendedCC} and its Wijesekera-style variant $\sf WK$~\cite{wijesekera1990constructive}.
Using translations between our logics and fragments of $\sf PDL$ \cite{FISCHER1979194}, we show that both $\sf CK^*$ and $\sf WK^*$ are {\sc ExpTime}-complete and enjoy an exponential-size finite model property. 
In particular, their $\Diamond$-free fragment is already {\sc ExpTime}-complete, thereby settling the conjecture of Afshari et al~\cite{AfshariGLZ24}.
By embedding $\sf CS4$ and $\sf WS4$ into our logics, we show their validity problems are also in {\sc ExpTime}.
\end{abstract}

\keywords{modal logic, constructive modal logic, master modality}

\section{Introduction}

There are many natural ways to define modal logics with an intuitionistic base (see e.g.~\cite{BalbianiGGO24,Grefe96,Simpson94,fitch1948intuitionistic,Wolter1997}), with the precise axioms and frame conditions varying wildly depending on the intended applications.
In particular, the `constructive' variants~\cite{Prawitz1965,Bellin2001ExtendedCC} are motivated by computational considerations and provide extensions of the Curry-Howard correspondence~\cite{DaviesP96}.
The basic constructive modal logic is known as $\sf CK$ \cite{Bellin2001ExtendedCC} (`constructive $\sf K$'), and its only modal axioms are two variants of the $\sf K$ axiom, namely, $\Box(p\to q)\to(\Box p\to \Box q)$ and $\Box(p\to q)\to(\Diamond p\to \Diamond q)$.
By adding $\neg\Diamond\bot$, we obtain the Wijesekera logic $\sf WK$~\cite{wijesekera1990constructive}.
One feature of constructive modal logics is that $\Diamond$ and $\Box$ are not inter-definable, and in particular, $\neg\Diamond\neg\varphi $ is not equivalent to $\Box\varphi$.
Thus $\sf CK$ and $\sf WK$ (as well as many of the other logics we mention) include both $\Diamond$ and $\Box$ as primitives.

With this computational motivation in mind, it is only natural to enrich constructive logics with modalities for iteration, starting with the `master' modality, $\nec$.
In a $\Diamond$-free context, we obtain the logic ${\sf CK}^*_\Box$, for which sound and complete Hilbert~\cite{celani} (there denoted $\sf IK^*_\Box$) and cyclic~\cite{AfshariGLZ24} (there denoted $\sf IM_K$) calculi have been developed.
The modality $\nec$ can also be interpreted as `henceforth' in a temporal setting, with a transitive $\Diamond$-like modality $\ps$ becoming `eventually', although such logics involve non-constructive axioms and the best upper bounds on complexity in that context are non-elementary~\cite{Boudou2017}.

Semantics for constructive modal logics are defined over frames $\tuple{W,\peq ,\R}$, where $\peq$ is used to model the intuitionistic implication and $\R$ is in principle an arbitrary relation on $W$ used to interpret the modalities.
In order to maintain the `upwards persistence' of intuitionistic truth, modalities are interpreted via a combination of $\peq$ and $R$.
This works without a hitch for defining semantics for $\sf CK$, which enjoys a Gödel-Tarski translation into the fusion ${\sf S4}\oplus {\sf K} $~\cite{Dalmonte25} and hence inherits the finite model property and {\sc PSpace} decidability~\cite{Wolter1996Fusions}.

As we will see, logics with the master modality also enjoy a Gödel-Tarski translation, but into classical $\sf PDL$ (Theorem \ref{WKiffPDL}), for which the finite model property and {\sc ExpTime}-completeness are well known \cite{FISCHER1979194,PRATT1980231}.
Constructive variants have also been considered, including a $\Diamond$-free variant, $\sf CPDL$, which enjoys a sequent calculus and the finite model property~\cite{NishimuraConstructivePDL}.
${\sf CK}^*_\Box$ can then be viewed as a fragment of $\sf CPDL$ with the sole modalities $[a]$ and $[a^*]$, where $a$ is a fixed atomic program.
A variant of $\sf PDL$ which includes $\Diamond$-like programs is $\sf iPDL$~\cite{degen}, although this logic may best be described as `semi-constructive', since complex formulas do not necessarily satisfy the upwards-persistence condition of intuitionistic truth.
However, $\sf iPDL$ does have the feature that it enjoys a G\"odel-Tarski translation into classical $\sf PDL$, from which the finite model property and decidability are obtained.

Interpreting $\langle \alpha \rangle$ constructively does seem to pose a technical challenge when $\alpha$ is a complex program, but as we will see, a fully constructive interpretation for the master modality $\ps$ can readily be obtained by treating it as a $\sf CS4$ modality, where $\sf CS4$ is $\sf CK$ extended with the axioms $\Box p\to\Box\Box p$ and $\Diamond\Diamond p\to \Diamond p$~\cite{AlechinaMPR01}.
In combination with $\nec$ and the   $\sf CK$ modalities $\Diamond$ and $\Box$, this gives rise to a constructive modal logic ${\sf CK}^*$, and its Wijesekera variant, ${\sf WK}^*$.
As we mentioned, the main difference between constructive and Wijesekera logics lies in the acceptance of $\neg\Diamond\bot$ in the latter, which semantically amounts to demanding that $\bot$ be false on all possible worlds (i.e., that models be {\em infallible}).
As this assumption simplifies translations into classical logic, we first show how ${\sf CK}^*$ can be translated into ${\sf WK}^*$ (Proposition \ref{CorrespondenceConstructiveAndWijesekera}),
allowing us to assume infallibility without loss of generality. Note that the translation is needed over the $\Diamond$-free fragment, as fallible and infallible models validate exactly the same formulas (Proposition \ref{EqCKWKBox}).

The G\"odel-Tarski translation for ${\sf WK}^*$ is similar to that for $\sf iPDL$, although it must be modified in order to maintain the upwards persistence of truth.
This allows for ${\sf WK}^*$  to immediately inherit upper bounds from ${\sf PDL}$ (Section \ref{sec:UpperBoundsMML}), namely, the exponential model property and {\sc ExpTime} decidability of the validity problem.
As a corollary, this implies that ${\sf CK}^*_\Box$ enjoys the same bounds, proving a conjecture of~\cite{AfshariGLZ24}.
By embedding ${\sf K}^*$ (i.e., the classical master modality logic) into ${\sf CK}^*_\Box$ (Theorem~\ref{KiffCK}), we conclude that ${\sf CK}^*_\Box$, ${\sf CK}^*$, and ${\sf WK}^* $ are all {\sc ExpTime}-complete.

As an application, we show that $\sf CS4$  may be seen as the $\ps,\nec$-fragment of ${\sf CK}^*$ (see Section \ref{CWS4}), and hence is also decidable in {\sc ExpTime}.
While $\sf CS4$ was known to enjoy the finite model property for this semantics~\cite{BalbianiDF21}, this fact alone only leads to a {\sc NExpTime} upper bound.
We also obtain an {\sc ExpTime} upper bound for its Wijesekera variant, $\sf WS4$.

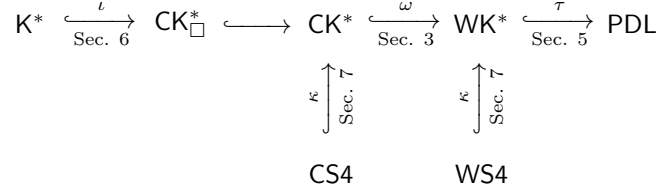
\begin{figure}[h!]
    \centering
\begin{tikzpicture}
	\begin{pgfonlayer}{nodelayer}
		\node [style=none] (0) at (0, 0) {$\sf CS4$};
		\node [style=none] (1) at (0, 2) {$\sf CK^*$};
		\node [style=none] (2) at (2, 0) {$\sf WS4$};
		\node [style=none] (3) at (2, 2) {$\sf WK^*$};
		\node [style=none] (4) at (4, 2) {$\sf PDL$};
        \node [style=none] (5) at (-2, 2) {${\sf CK}^*_\Box$};
        \node [style=none] (6) at (-4, 2) {$\sf K^*$};
	\end{pgfonlayer}
	\begin{pgfonlayer}{edgelayer}
        \draw[draw=none] (0) -- node[midway, sloped] {$\xhookrightarrow[\text{Sec. \ref{CWS4}}]{\kappa}$} (1);
        \draw[draw=none] (2) -- node[midway, sloped] {$\xhookrightarrow[\text{Sec. \ref{CWS4}}]{\kappa}$} (3);
        \draw[draw=none] (1) -- node[midway, sloped] {$\xhookrightarrow[\text{Sec. \ref{secCW}}]{\omega}$} (3);
        \draw[draw=none] (3) -- node[midway, sloped] {$\xhookrightarrow[\text{Sec. \ref{sec:WKtoPDL}}]{\tau}$} (4);
        \draw[draw=none] (6) -- node[midway, sloped] {$\xhookrightarrow[\text{Sec. \ref{sec:KtoCKbox}}]{\iota}$} (5);

        \draw[draw=none] (5) -- node[midway, sloped] {$\xhookrightarrow[]{\hspace{0.7cm}}$} (1);
	\end{pgfonlayer}
\end{tikzpicture}

    \caption{Overview of the master-modality logics and the embeddings.}
    \label{fig:DiagramLogics}
\end{figure}

\section{Constructive Master-Modality Logics}\label{secCWMM}

Let us introduce our `master-modality' logics, $\sf CK^*$ and $\sf WK^*$.
The two share the same language, defined as follows.
Fix a set $\mathrm{Prop}$ of proposition symbols.
The language $\mathcal L^*$ is defined by the following grammar:
\[
    \varphi := \bot \mid p\mid \varphi\land\varphi  \mid \varphi\lor\varphi \mid \varphi\to\varphi \mid \Box\varphi \mid \Diamond\varphi \mid \Box^* \varphi \mid \Diamond^* \varphi.
\]

We write $\neg \varphi$ for $\varphi \to \bot$.  
For a formula $\varphi$, we write $Subf(\varphi)$ for the set of its subformulas, and extend this convention to all languages considered in this work. The fragment without $\Diamond$ or $\ps$ is denoted by $\mathcal L^*_\Box$.
One could also consider e.g.~the $\Box,\nec$-free fragment, $\mathcal L^*_\Diamond$, although this has not received as much attention in the literature (but see~\cite{BoudouDF22,Fernandez-Duque18}).

Our semantics is based on the standard birelational semantics of constructive modal logics~\cite{wijesekera1990constructive,mendler2005constructive}.

\begin{definition}\label{def::csf-model}
    A {\em $\sf CK$-frame} is a tuple $\mathcal F=\tuple{\domai \Frame ,\peq^\Frame, \R^\Frame}$, where $\domai{\Frame}$ is the set of {\em possible worlds}, the
{\em intuitionistic relation}
$\peq^{\Frame}$ is a preorder and the {\em modal relation}
$\R^{\Frame}$
is any relation on
$\domai{\Frame}$.

A  {\em ${\sf CK}$-model} is a tuple $\Model=\tuple{\domai{\Model}, \peq^\Model, \R^\Model, \val\cdot^{\Model}}$ consisting of a $\sf CK$-frame equipped with a {\em $\sf CK$-valuation} $\val\cdot^{\Model}:\mathrm{Prop} \cup \{\bot\}\to\mathcal P(\domai{\Model})$ such that, for all $p\in\mathrm{Prop}$,
\begin{enumerate}
 
 \item
 $\val\bot^\Model \subseteq\val p^\Model$ \quad 
 \textit{(atomic ex falso quodlibet),}
 
 \item if $w\peq v$ and $w\in \val p^{\Model}$ then $v\in \val p^{\Model}$ \quad \textit{(atomic persistence),}
 
 \item if $w\in \val\bot^\Model $ and $w\peq v$ or $w R v$ then $v\in \val\bot^\Model$ \quad \textit{(falsum persistence),}

\item\label{itCondSerial} if $w\in \val\bot^\Model $ then there exists $v$ such that $w R v$ \quad \textit{(falsum seriality).}

\end{enumerate}

A model $\Model$ is {\em infallible} if $\val\bot^\Model=\varnothing$.
A \emph{${\sf WK}$-model} is an infallible ${\sf CK}$-model.
\end{definition}

When clear from context, we may drop superscripts and write e.g.~$\peq$ instead of $\peq^\Model$. Henceforth, in the figures we present relation $\peq$ with a vertical red arrow, while $R$ is shown in a blue horizontal one.
Solid arrows should be read universally and dashed arrows existentially (see e.g.~the semantics for $\Box$ and $\Diamond$ in Figure~\ref{fig:placeholder}).

We use standard relation-algebraic notation: for a binary relation \(R\), \(R^*\) denotes its reflexive-transitive closure, \((R;S)\) its composition with the relation $S$ (i.e. \(x(R;S)y \iff \exists z\, (x R z \land z S y)\)), and \(R^n\) the \(n\)-fold composition, defined by \(R^0 := \mathrm{Id}\) and \(R^{n+1} := (R;R^n)\).

\begin{remark}
Here, we distinguish between the constructive modal logic $\sf CK$~\cite{Bellin2001ExtendedCC} and the Wijesekera variant $\sf WK$~\cite{wijesekera1990constructive}, which differ only in that the former allows $\bot$ to be true on some possible worlds.
 Thus, $\sf CK$-frames may equivalently be called $\sf WK$-frames, but $\sf WK$ models validate $\neg\Diamond \bot$, which is not a theorem of $\sf CK$.
 In Section~\ref{secCW}, we will show that working with Wijesekera models does not lead to any loss in generality when proving our main results.
\end{remark}

For $\mathcal{M}$ a ${\sf CK}$-model and $w \in W^\mathcal{M}$, the satisfaction relation $\Vdash$ is defined by
\begin{enumerate}[label=\textbullet, wide, labelwidth=!, labelindent=0pt, itemsep=0pt]
    \item $(\mathcal{M}, w) \Vdash \bot$ iff $w \in \|\bot\|^\mathcal{M}$; 
    \item $(\mathcal{M}, w) \Vdash p \in \mathrm{Prop}$ iff $w\in \|p\|^\mathcal{M}$;

    \item $(\mathcal{M}, w) \Vdash \varphi \wedge \psi$ iff $(\mathcal{M}, w) \Vdash \varphi$ and $(\mathcal{M}, w) \Vdash \psi$;

    \item $(\mathcal{M}, w) \Vdash \varphi \vee \psi$ iff $(\mathcal{M}, w) \Vdash \varphi$ or $(\mathcal{M}, w) \Vdash \psi$;

    \item $(\mathcal{M}, w) \Vdash \varphi \to \psi$ iff for all $v \seq w$, if $(\mathcal{M}, v) \Vdash \varphi$ then $(\mathcal{M}, v) \Vdash \psi$;

    \item $(\mathcal{M}, w) \Vdash \Box \varphi$ iff for all $v$, if $w (\peq ; R) v$ then $(\mathcal{M}, v) \Vdash \varphi$;

    \item $(\mathcal{M}, w) \Vdash \Diamond \varphi$ iff for all $v \seq w$ there exists $u$ such that $v \R u$ and $(\mathcal{M}, u) \Vdash \varphi$; 

    \item $(\mathcal{M}, w) \Vdash \nec \varphi$ iff for all $v$ if $w \ (\peq ; R)^* \ v$ then $(\mathcal{M}, v) \Vdash \varphi$; 

    \item $(\mathcal{M}, w) \Vdash \ps \varphi$ iff for all $v$ such that $w \peq v$ there is some $u \in \mathcal{M}$ such that $v R^* u$ and $(\mathcal{M}, u) \Vdash \varphi$.
    
\end{enumerate}

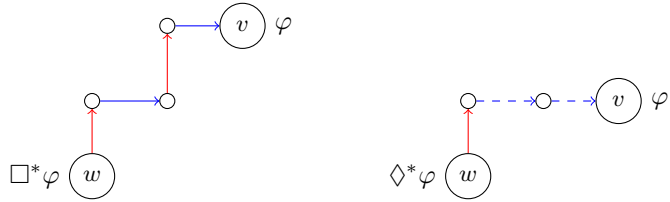
\begin{figure}[h]
    \centering

\begin{tikzpicture}
	\begin{pgfonlayer}{nodelayer}
		\node [style=root, label=right:$\varphi$] (2) at (-2, 2) {$v$};
		\node [style=sroot] (3) at (-3, 2) {};
		\node [style=sroot] (4) at (-3, 1) {};
		\node [style=sroot] (5) at (-4, 1) {};
		\node [style=root, label=left:$\nec \varphi$] (6) at (-4, 0) {$w$};
		\node [style=root, label=left:$\ps \varphi$] (7) at (1, 0) {$w$};
		\node [style=sroot] (8) at (1, 1) {};
		\node [style=sroot] (12) at (2, 1) {};
		\node [style=root, label=right:$\varphi$] (13) at (3, 1) {$v$};
	\end{pgfonlayer}
	\begin{pgfonlayer}{edgelayer}
		\draw [style=edgeto, red] (6) to (5);
		\draw [style=edgeto, blue] (5) to (4);
		\draw [style=edgeto, red] (4) to (3);
		\draw [style=edgeto, blue] (3) to (2);
		\draw [style=edgeto, red] (7) to (8);
		\draw [style=edgeto, blue, dashed] (8) to (12);
		\draw [style=edgeto, blue, dashed]  (12) to (13);
	\end{pgfonlayer}
\end{tikzpicture}

    \caption{Master modalities.}
    \label{fig:placeholder}
\end{figure}

We abuse notation and write $\|\varphi\|^\mathcal{M}$ for $\varphi \in \mathcal{L}^*$ to denote the set of worlds $w \in W^\mathcal{M}$ such that $(\mathcal{M}, w) \Vdash \varphi$. A formula $\varphi$ is  \emph{valid} in $\mathcal{M}$ if  $\|\varphi\|^\mathcal{M} = W^\mathcal{M}$, denoted by $\mathcal{M} \Vdash \varphi$. We say that $\varphi$ is \emph{falsifiable} in $\mathcal{M}$ if there is some $w \in W^\mathcal{M}$ such that $(\mathcal{M}, w) \not\Vdash \varphi$.

As is standard, the persistence condition ensures that truth is preserved along intuitionistic paths.

\begin{remark}[Truth persistence]
For any $\varphi\in\mathcal L^*$, any $\sf CK$-model $\mathcal M$, and any world $w$ of $\mathcal M$, if $(\mathcal{M}, w) \Vdash \varphi$, then $(\mathcal{M}, v) \Vdash \varphi$ for any $v \seq w$.
\end{remark}

Exploiting truth persistence, we define an equivalent interpretation of $\nec$ that will be useful for the technical developments in Section~\ref{CWS4}.

\begin{lemma}
\label{AlternativeSemDiamondBox}
If $\varphi\in\mathcal L^*$,  $\mathcal M$ is any $\sf CK$-model, and $w$ is any world  of $\mathcal M$, then $(\mathcal{M}, w) \Vdash \nec \varphi$ iff for all $v$ such that $w (\peq ; R^*)^* v$, $(\mathcal{M},v) \Vdash \varphi$.
\end{lemma}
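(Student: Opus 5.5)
We compare the two relations $(\peq ; R)^*$ and $(\peq ; R^*)^*$ and show that, for the purpose of quantifying over worlds where $\varphi$ must hold, they differ only by $\peq$-steps, which the Remark on truth persistence absorbs.

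The right-to-left direction is immediate from the inclusion $(\peq;R)^* \subseteq (\peq;R^*)^*$. Indeed, $R \subseteq R^*$ gives $(\peq;R)\subseteq(\peq;R^*)$, and taking reflexive-transitive closures preserves inclusions. So if $\varphi$ holds at every $(\peq;R^*)^*$-successor of $w$, it holds at every $(\peq;R)^*$-successor, and $(\mathcal M,w)\Vdash\nec\varphi$ by the semantic clause for $\nec$.

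For the left-to-right direction, assume $(\mathcal M,w)\Vdash\nec\varphi$ and $w(\peq;R^*)^*v$. The key step is the inclusion
\[
(\peq;R^*)^* \subseteq (\peq;R)^*;\peq .
\]
To prove it, first note that $R^* \subseteq (\peq;R)^*$. This holds because $R\subseteq(\peq;R)$, using reflexivity of $\peq$. Hence $\peq;R^* \subseteq \peq;(\peq;R)^*$. Unfolding $(\peq;R)^*$ as either the identity or $(\peq;R);(\peq;R)^*$, and using transitivity of $\peq$ to merge the leading $\peq\,;\peq$, we get $\peq;(\peq;R)^* \subseteq (\peq;R)^*;\peq$. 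In the identity case we simply land in $\peq$, which is contained in $(\peq;R)^*;\peq$.

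Then an induction on the number $n$ of $(\peq;R^*)$-steps shows $(\peq;R^*)^n \subseteq (\peq;R)^*;\peq$. In the inductive step, a trailing $\peq$ from the induction hypothesis is followed by a new $\peq;R^*$. The two $\peq$'s merge by transitivity, giving $(\peq;R)^*;(\peq;R^*)$, which the one-step inclusion turns into $(\peq;R)^*;(\peq;R)^*;\peq = (\peq;R)^*;\peq$.

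Given this inclusion, there is $u$ with $w(\peq;R)^*u$ and $u\peq v$. From $(\mathcal M,w)\Vdash\nec\varphi$ we get $(\mathcal M,u)\Vdash\varphi$, and truth persistence then gives $(\mathcal M,v)\Vdash\varphi$.

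The only delicate point is the relational bookkeeping in the inclusion $(\peq;R^*)^* \subseteq (\peq;R)^*;\peq$. In particular, the case of zero $R$-steps inside $R^*$ produces a bare $\peq$-step. Such a step cannot be absorbed into $(\peq;R)^*$ itself, which is why the trailing $\peq$ is needed and why persistence is invoked at the end. I expect handling that base case cleanly to be the main (minor) obstacle.
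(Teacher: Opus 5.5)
Your proposal is correct and follows essentially the same route as the paper: the right-to-left direction comes from the inclusion $(\peq;R)^*\subseteq(\peq;R^*)^*$, and the left-to-right direction reorganizes any $(\peq;R^*)^*$-path into $(\peq;R)^*;\peq$ and then applies truth persistence. The paper asserts this reorganization informally with an example, whereas you establish the inclusion $(\peq;R^*)^*\subseteq(\peq;R)^*;\peq$ explicitly by induction, including the zero-$R$-step case.
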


\begin{proof}
The implication from right to left is immediate, since $(\peq;R)^* \subseteq (\peq;R^*)^*$. For the converse, observe that any path in $(\peq;R^*)^*$ can be reorganized into a chain of $(\peq;R)^*; \peq$ by exploiting the reflexivity and transitivity of $\peq$. For example, a segment $w_1 R w_2 R w_3$ can be rewritten as $w_1 \peq w_1 R w_2 \peq w_2 R w_3 \peq w_3$. Hence, if $v$ is such that $w (\peq ;R^*)^* v$, we can reorganize the path as $w (\peq ;R)^* u \peq v$. By assumption $\varphi$ holds at $u$, and by truth persistence, it holds at $v$ as well, concluding the proof.
\end{proof}

The master modality logics are defined semantically as follows.

\begin{definition}
The logic ${\sf CK}^*$ consists of formulas of $\mathcal{L}^*$ that are valid on all ${\sf CK}$-models. The logic ${\sf WK}^*$ consists of those that are valid on all ${\sf WK}$-models.
\end{definition}

We write $\Vdash_{\sf CK^*} \varphi$ (respectively, $\Vdash_{\sf WK^*} \varphi$) to mean that $\varphi \in \mathcal{L}^*$ is valid on all ${\sf CK}$-models (${\sf WK}$-models, respectively).

Of particular interest is the $\Diamond, \ps$-free fragment of these logics, for which we consider the corresponding constructive master-modality logic.

\begin{definition}
The logic ${\sf CK}^*_\Box$ consists of all formulas $\varphi \in \mathcal{L}^*_\Box$ that are valid on all $\sf CK$-models.
\end{definition}

For $\Diamond$-free formulas, validity is independent of model fallibility, so we do not need to distinguish ${\sf CK}^*_\Box$ from ${\sf WK}^*_\Box$, and can always assume infalliblity.
It suffices to observe that for this fragment, satisfiability in a $\sf CK$-model is preserved when restricting to the infallible worlds. In cases where an infallible world access a fallible one, \textit{ex-falso} ensures that any satisfiability constraint is trivially satisfied. The detailed proof can be found in the Appendix.

\begin{proposition}
\label{EqCKWKBox}
If $\varphi \in \mathcal{L}^*_\Box$, then $\varphi$ is valid in all $\sf CK$-models iff $\varphi$ is valid in all $\sf WK$-models.
\end{proposition}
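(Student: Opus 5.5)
One direction is immediate: every $\sf WK$-model is a $\sf CK$-model, so validity on all $\sf CK$-models implies validity on all $\sf WK$-models. For the converse I argue by contraposition. Suppose $(\Model,w)\not\Vdash\varphi$ for some $\sf CK$-model $\Model$, with $\varphi\in\mathcal L^*_\Box$. Then $w\notin\val\bot^\Model$, since by atomic ex falso every formula, $\Diamond$-free or not, holds at fallible worlds. (A quick induction confirms this: $\to$ and $\Box$ clauses only look at $\peq$- and $(\peq;R)$-successors, which are again fallible by falsum persistence.) I then restrict $\Model$ to its infallible part. Let $W'=W^\Model\setminus\val\bot^\Model$, and let $\Model'$ be the submodel on $W'$ with the restricted relations $\peq$ and $R$, the restricted valuation, and $\val\bot^{\Model'}=\varnothing$. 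Restricting a preorder gives a preorder, and atomic persistence is inherited. Ex falso and falsum persistence and seriality are vacuous, so $\Model'$ is a $\sf WK$-model containing $w$.

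The key step is a truth lemma: for every $\psi\in\mathcal L^*_\Box$ and $u\in W'$, $(\Model,u)\Vdash\psi$ iff $(\Model',u)\Vdash\psi$. I prove it by induction on $\psi$. The atomic, $\bot$, $\land$ and $\lor$ cases are trivial, since $u$ is infallible and so $\bot$ fails on both sides.

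For $\psi_1\to\psi_2$, the $\Model$-clause quantifies over all $v\seq u$. Those $v$ lying in $W'$ are handled by the induction hypothesis. Any fallible $v\seq u$ satisfies every formula, in particular $\psi_2$, so it imposes no constraint. Hence the two clauses agree.

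For $\Box\psi$ and $\nec\psi$ I argue in the same way. A $(\peq;R)$-path, or a $(\peq;R)^*$-path, from $u$ in $\Model$ either stays entirely inside $W'$ or, by falsum persistence, ends at a fallible world, where $\psi$ holds trivially. Conversely, every path in $\Model'$ is a path in $\Model$. So the universal conditions coincide, using the induction hypothesis at the endpoints lying in $W'$. This is the main obstacle I expect, because the argument genuinely needs the $\Diamond$-free restriction. A $\Diamond$ clause would require the existence of an $R$-successor, and the only successor might be fallible, which breaks the equivalence; this is exactly why $\neg\Diamond\bot$ separates $\sf CK$ from $\sf WK$. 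Applying the truth lemma at $w$ gives $(\Model',w)\not\Vdash\varphi$, completing the converse.
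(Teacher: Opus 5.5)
Your proof is correct and is essentially the paper's own argument: you restrict $\Model$ to its infallible worlds and prove the truth lemma by induction on $\mathcal L^*_\Box$-formulas, using two facts, namely that fallible $\peq$-successors satisfy everything (the $\to$ case) and that, by falsum persistence, any $(\peq;R)^*$-path leaving the infallible part ends at a fallible world (the $\Box$ and $\nec$ cases). You are slightly more careful than the paper, which writes $y R^{\Model} z$ where a $\peq$-successor is meant and leaves implicit that all formulas (not just atoms) hold at fallible worlds; your parenthetical justification of that fact skips the $\Diamond$ clause (which would need falsum seriality), but you only ever use it for $\Diamond$-free formulas.
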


\begin{remark}
Celani~\cite{celani} proposed a sequent calculus for ${\sf CK}^*_\Box$, which he calls $\sf IM^*_\Box$, and proved its completeness with respect to $\sf WK$-models.
By Proposition~\ref{EqCKWKBox}, it is also complete for the class of $\sf CK$-models.
\end{remark}

\section{Translating $\sf CK^*$ into $\sf WK^*$}\label{secCW}

In this section, we show that our constructive master modality logic embeds into its `infallible' Wijesekera variant.
This will allow us to focus our attention mostly on $\sf WK^*$ in subsequent sections, with results applying to $\sf CK^*$ as well.
The intuition for the translation is simple: since, constructively, $\bot$ is essentially a variable which implies all other variables, we can replace it by the conjunction of said variables.
Of course, there may be infinitely many propositional variables in the full language, but only finitely many will be relevant when analysing a specific $\varphi$.
Thus, we will arrive at a family of translations $\omega_P$, indexed by a finite set of variables $P$, intended to be applied only to formulas whose variables belong to $P$.

There are, however, a couple of caveats to this idea.
The first is that it is not convenient for $\bot$ to be {\em precisely} the conjunction of variables appearing in $\varphi$, or else  $\omega_P(\bigwedge P\to\bot)$ would be a valid formula.
This is solved by adding some variable $p_\bot$ not appearing in $\varphi$ to $P$.
In principle, $p_\bot$ may depend on $\varphi$, but we can keep it fixed if we make the simplifying assumption that $p_\bot$ is in the language of ${\sf WK}^*$ but not of ${\sf CK}^*$.

The second caveat is that $\val\bot$ is generated, which would be taken care of by setting $\omega_P(\bot) =\nec \bigwedge P $.
This almost works, but does not address the final caveat, which is that $\R$ must be serial when restricting to $\val\bot$.
This is solved by adding pretty much any conjunct of the form $\Diamond\psi$ to $\bigwedge P $, and the simplest choice seems to be $\Diamond p_\bot$.
We thus arrive at the following.

\begin{definition}
Let $\varphi$ be a $\mathcal L^*$ formula, ${\rm Var}(\varphi)$ be the set of propositional variables appearing in $\varphi$,   $p_\bot$ be a variable not appearing in $\varphi$,  and   $P \subseteq \mathrm{Prop}$ be a set of propositional variables. The translation $\omega_P : \mathcal{L}^* \to \mathcal{L}^*$ is defined as $\omega_P(\varphi) := \varphi \big[\bot\leftarrow \nec (\bigwedge P \wedge\Diamond p_\bot) \big]$ for any $\varphi \in \mathcal{L}^*$ (with $\cdot[\cdot\leftarrow\cdot]$ denoting substitution). We simply write $\omega(\varphi)$ to denote $\omega_{P(\varphi)}(\varphi)$ for $P(\varphi) := {\rm Var}(\varphi) \cup \{p_\bot\}$.
\end{definition}

It is essential that this translation is polynomial, whose detailed proof can be found in the Appendix.

\begin{proposition}
\label{PolynomialOmega}
$|\omega(\varphi)|$ is polynomial on $|\varphi|$.
\end{proposition}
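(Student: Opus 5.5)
The plan is a direct size count. Write $n := |\varphi|$, measured as the number of symbol occurrences (equivalently, nodes of the syntax tree). Since every variable of $\varphi$ occurs in $\varphi$, we have $|{\rm Var}(\varphi)| \le n$, so $|P(\varphi)| \le n+1$. Hence the substituted formula
\[
\beta := \nec\Big(\bigwedge P(\varphi) \wedge \Diamond p_\bot\Big)
\]
has size $|\beta| \le c\cdot(n+1)$ for a small constant $c$. Concretely, the conjunction over $|P(\varphi)|$ variables needs $|P(\varphi)|-1$ occurrences of $\wedge$; adding one more $\wedge$, one $\Diamond$, one occurrence of $p_\bot$ and one $\nec$ gives $|\beta| = O(n)$. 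If parentheses are counted, this only changes the constant.

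Next, $\omega(\varphi)$ is obtained by replacing each occurrence of $\bot$ in $\varphi$ by $\beta$. I would prove by induction on the structure of $\varphi$ that
\[
|\omega_P(\varphi)| \le |\varphi| + k_\bot(\varphi)\cdot(|\beta|-1),
\]
where $k_\bot(\varphi)$ is the number of occurrences of $\bot$ in $\varphi$. In the base cases, $\bot$ becomes $\beta$ and a variable $p$ is left unchanged. In the inductive step, each connective or modality contributes one symbol and the substitution commutes with it, so the bound adds up. Since $k_\bot(\varphi) \le n$, this gives $|\omega(\varphi)| \le n + n\cdot O(n) = O(n^2)$, which is polynomial.

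One point needs care: $\neg\psi$ abbreviates $\psi\to\bot$, so negations hide occurrences of $\bot$. These are still counted in $k_\bot$, because size is measured on the unabbreviated formula. If instead $|\varphi|$ is measured with $\neg$ taken as primitive, each $\neg$ still expands to a constant number of symbols plus one $\bot$, so the bound remains $O(n^2)$.

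I do not expect a genuine obstacle here. The only subtlety is that the substitution is applied once and is not iterated: $\beta$ itself contains no $\bot$, so there is no blow-up from nested replacement. If one prefers a DAG or subformula-count measure, the bound is even linear, since $|Subf(\omega(\varphi))| \le |Subf(\varphi)| + |Subf(\beta)| = O(n)$.
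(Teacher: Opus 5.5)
Your proposal is correct and follows essentially the same route as the paper: both count $|\omega(\varphi)| = |\varphi| + k\,(|\beta|-1)$, where $k$ is the number of occurrences of $\bot$. Both then bound $|\beta| \le 2|P(\varphi)|+3 = O(|\varphi|)$ and $k \le |\varphi|$, which gives a quadratic bound. Your remarks on hidden $\bot$'s in negations and on the linear subformula-count bound are correct extras.
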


The intention is for $\omega_P$ to preserve validity, i.e.~that it is  {\em correct} (valid formulas map to valid formulas) and {\em exact} (falsifiable formulas map to falsifiable formulas).
To establish exactness, we construct the infallible model corresponding to a given fallible one by interpreting $p_\bot$ as $\bot$.

\begin{proposition}
\label{ValidityCtoW}
    Let $\Model = \tuple{W,\peq,R,\val\cdot }$ be a ${\sf CK}$-model.
    Define $\Model^{\sf WK} := \tuple{W,\peq,R,\val\cdot^{\sf WK} }$ by setting 
    \[
    \val p^{\sf WK} :=
    \begin{cases}
    \val{\bot} &\text{if $p=p_\bot$}\\
        \varnothing &\text{if $p= \bot$}\\
     \val p &\text{otherwise.}
    \end{cases}
    \]
    Then, $\Model^{\sf WK}$ is a ${\sf WK}$-model and for every $w\in W$ and $\varphi \in \mathcal{L}^*$, $(\Model,w)\Vdash \varphi$ iff $(\Model^{\sf WK},w)\Vdash \omega(\varphi)$.
\end{proposition}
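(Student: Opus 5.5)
The plan has three steps: check that $\Model^{\sf WK}$ is a $\sf WK$-model, identify $\val\bot^\Model$ with the extension of the substituted formula $\nec(\bigwedge P\wedge\Diamond p_\bot)$, and then run a routine induction on formulas. Since $p_\bot$ is not in the language of ${\sf CK}^*$, the valuation of $\Model$ does not constrain it. We may therefore set $\val{p_\bot}^{\sf WK}=\val\bot^\Model$ freely.

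\emph{Step 1: $\Model^{\sf WK}$ is a $\sf WK$-model.} Infallibility holds by definition, so atomic ex falso, falsum persistence and falsum seriality hold vacuously. Atomic persistence holds for ordinary variables because their valuation is unchanged. It holds for $p_\bot$ because $\val\bot^\Model$ is $\peq$-upward closed by falsum persistence in $\Model$.

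\emph{Step 2: key lemma.} For every finite $P$ with $p_\bot\in P$, and every $w\in W$,
\[
w\in\val\bot^\Model \iff (\Model^{\sf WK},w)\Vdash \nec\big(\textstyle\bigwedge P\wedge\Diamond p_\bot\big).
\]
For the right-to-left direction, note that $w\,(\peq;R)^*\,w$ via the $0$-fold composition. Hence $w\Vdash p_\bot$ in $\Model^{\sf WK}$, i.e.\ $w\in\val\bot^\Model$.

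For the left-to-right direction, suppose $w\in\val\bot^\Model$ and let $w\,(\peq;R)^*\,v$. Falsum persistence, applied along both $\peq$ and $R$, gives $v\in\val\bot^\Model$. Atomic ex falso in $\Model$ then gives $v\in\val q^\Model$ for every variable $q$, and also $v\in\val{p_\bot}^{\sf WK}$; so $v\Vdash\bigwedge P$. For $\Diamond p_\bot$, take any $v'\seq v$. Then $v'\in\val\bot^\Model$ by persistence, falsum seriality gives some $u$ with $v'Ru$, and $u\in\val\bot^\Model=\val{p_\bot}^{\sf WK}$ by persistence along $R$. This use of seriality is the one delicate point, and it is exactly why the conjunct $\Diamond p_\bot$ was added.

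\emph{Step 3: induction.} Fix $P=P(\varphi)$. I would prove, by induction on $\psi\in Subf(\varphi)$, that for all $w$, $(\Model,w)\Vdash\psi$ iff $(\Model^{\sf WK},w)\Vdash\omega_P(\psi)$. The induction must be carried out with the fixed $\omega_P$ rather than with $\omega(\psi)$, since $P(\psi)$ may be smaller than $P$. The base case $\bot$ is Step 2. For a variable $p\in{\rm Var}(\varphi)$ we have $p\neq p_\bot$, so the valuations agree. All connective and modal cases ($\wedge,\vee,\to,\Box,\Diamond,\nec,\ps$) commute with substitution. Their truth clauses depend only on $\peq$ and $R$, which are identical in both models, so these cases follow immediately from the induction hypothesis. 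Taking $\psi=\varphi$ gives the proposition.
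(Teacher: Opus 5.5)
Your proof is correct and follows essentially the same route as the paper. Like the paper, you run an induction with a fixed index set $P$ containing $P(\varphi)$, and you handle the $\bot$ case exactly as it does, via falsum persistence, atomic ex falso and falsum seriality. The only cosmetic differences are that the paper quantifies over all $P \supseteq P(\varphi)$ rather than fixing $P = P(\varphi)$ and inducting over subformulas, and it checks the $\nec$ conjunct along $(\peq;R^*)^*$-paths instead of $(\peq;R)^*$-paths, neither of which matters.
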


\begin{proof}
$\mathcal{M}^{\sf WK}$ is a ${\sf WK}$-model as the model conditions follow directly from the definition; we leave the straightforward verification to the reader. For the second part of the proof, we show by an easy induction on $\varphi \in \mathcal{L}^*$ the strengthened claim that, for every $P \subseteq \mathrm{Prop}$ with $P(\varphi) \subseteq P$, $(\Model, w) \Vdash \varphi 
$ if and only if $
(\Model^{\sf WK}, w) \Vdash \omega_P(\varphi)$. We illustrate the case of $\bot$; other cases follow straightforwardly from the definitions and the inductive hypothesis.

First, assume that $(\mathcal{M}^{\sf WK}, w) \Vdash \omega_P(\bot) = \nec (\bigwedge P \wedge \Diamond p_\bot)$. Thus, we conclude $w \in \|p_\bot\|^{\sf WK} = \|\bot\|$, given that $p_\bot \in P$.

Now, assume $w \in \|\bot\|$. We aim to show that $(\mathcal{M}^{\sf WK}, w) \Vdash \nec \big(\bigwedge P \wedge \Diamond p_\bot\big)$. To this end, let $v$ be such that $w (\peq; R^*)^* v$. By falsum persistence, $v \in \|\bot\|$. Combining falsum seriality and persistence we deduce  $(\mathcal{M}^{\sf WK}, v) \Vdash \Diamond p_\bot$. It now remains to show that $(\mathcal{M}^{\sf WK}, v) \Vdash p$ for each $p \in P$. For $p = p_\bot$, we trivially see $v \in \|\bot\| = \|p_\bot\|^{\sf WK}$. Otherwise, by atomic ex falso, $v \in \|\bot\| \subseteq \|p\| = \|p\|^{\sf WK}$. \qedhere
\end{proof}

For correctness, we transform an infallible model into a fallible one preserving truth and falsity.
Let $\Model = \tuple{W,\peq,R,\val\cdot }$ be a ${\sf WK}$-model. For a given $\varphi \in \mathcal{L}^*$ and a set of propositional variables $P \subseteq \mathrm{Prop}$, we define $\Model^{\sf CK}_{\varphi, P} := \tuple{W,\peq,R,\val\cdot^{\sf CK}_{\varphi, P} }$ by setting $\|p\|^{\sf CK}_{\varphi,P} := \|p\|$ if $p \in \mathrm{Var}(\varphi)$, and $\|p\|^{\sf CK}_{\varphi, P} := \|\omega_P(\bot)\|$ otherwise. We write $\mathcal{M}^{\sf CK}_\varphi$ as a shorthand for $\mathcal{M}^{\sf CK}_{\varphi, P(\varphi)}$. Indeed, this construction yields a $\sf CK$-model; the detailed proof can be found in the Appendix.

\begin{lemma}
\label{IsModelWtoC}
If $\Model$ is a $\sf WK$-model and $\varphi \in \mathcal{L}^*$, then 
$\Model^{\sf CK}_\varphi$ is a ${\sf CK}$-model.
\end{lemma}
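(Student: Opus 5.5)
The plan is to check the four conditions of Definition~\ref{def::csf-model} directly for $\Model^{\sf CK}_\varphi$. Its frame $\tuple{W,\peq,R}$ is unchanged, so it is already a $\sf CK$-frame, and only the valuation needs attention. Write $P := P(\varphi)$ and
\[
B := \val{\omega_P(\bot)}^{\Model} = \val{\nec(\textstyle\bigwedge P\wedge\Diamond p_\bot)}^{\Model}.
\]
Since $\bot\notin{\rm Var}(\varphi)$ and $p_\bot\notin{\rm Var}(\varphi)$, the new valuation assigns $B$ to $\bot$, to $p_\bot$, and to every variable outside ${\rm Var}(\varphi)$. It agrees with $\val\cdot^\Model$ on ${\rm Var}(\varphi)$. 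The single fact I will use repeatedly is that $B$ is the truth set in $\Model$ of an $\mathcal L^*$-formula, so it inherits truth persistence and the closure properties of $\nec$. Infallibility of $\Model$ will play no role.

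\emph{Atomic ex falso.} I must show $B\subseteq \val p^{\sf CK}_\varphi$ for all $p$. If $p\notin{\rm Var}(\varphi)$, both sides equal $B$. If $p\in{\rm Var}(\varphi)\subseteq P$, take $w\in B$. Since $(\peq;R)^*$ is reflexive, $w\Vdash\bigwedge P$, so $w\in\val p^\Model = \val p^{\sf CK}_\varphi$.

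\emph{Atomic persistence.} For $p\in{\rm Var}(\varphi)$ this is atomic persistence in $\Model$. For any other $p$ the set is $B$, which is upward closed by the Truth persistence remark applied to $\omega_P(\bot)$ in $\Model$.

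\emph{Falsum persistence.} Closure of $B$ under $\peq$ is again truth persistence. For closure under $R$, let $w\in B$ and $w\R v$, and suppose $v\,(\peq;R)^*\,u$. Prefixing with $w\peq w\R v$ and using reflexivity of $\peq$ gives $w\,(\peq;R)\,v$, hence $w\,(\peq;R)^*\,u$. Therefore $u\Vdash\bigwedge P\wedge\Diamond p_\bot$, and so $v\in B$. Alternatively, one can appeal to Lemma~\ref{AlternativeSemDiamondBox}.

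\emph{Falsum seriality.} If $w\in B$, then reflexivity gives $w\Vdash\Diamond p_\bot$ in $\Model$. Instantiating the universal clause of $\Diamond$ at $v=w$, which is allowed since $w\peq w$, yields some $u$ with $w\R u$.

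I do not expect a real obstacle here. The only delicate points are, first, bookkeeping which symbols receive the value $B$ (in particular $\bot$ and $p_\bot$), and second, making sure that the $R$-closure and seriality steps use reflexivity of $(\peq;R)^*$ and of $\peq$ correctly. The conjunct $\Diamond p_\bot$ in $\omega_P(\bot)$ is exactly what makes seriality work.
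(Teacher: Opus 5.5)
Your proof is correct and follows the paper's own argument: it checks the four model conditions directly, using reflexivity of $(\peq;\R)^*$ for atomic ex falso and for falsum seriality, and the closure properties of $\nec$ (truth persistence) for the two persistence conditions. The only difference is in the $\R$-closure of $B$. You argue directly from $w\,(\peq;\R)\,v$ and transitivity of $(\peq;\R)^*$, whereas the paper appeals to the $(\peq;\R^*)^*$ reformulation of Lemma~\ref{AlternativeSemDiamondBox} and leaves that transitivity step implicit.
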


The translation is shown to be correct using the model construction described above.

\begin{proposition}
\label{ValidityWtoC}
If $\Model$ is a $\sf WK$-model, $\varphi \in \mathcal{L}^*$ and $w \in W$, then $(\Model^{\sf CK}_\varphi,w)\Vdash \varphi $ iff $
(\Model,w)\Vdash \omega(\varphi)$.
\end{proposition}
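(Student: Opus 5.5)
Fix $P := P(\varphi)$ and let $\beta := \omega_P(\bot) = \nec(\bigwedge P \wedge \Diamond p_\bot)$. The natural route is an induction on subformulas $\psi \in Subf(\varphi)$ proving that for every world $w$, $(\Model^{\sf CK}_\varphi, w) \Vdash \psi$ iff $(\Model, w) \Vdash \omega_P(\psi)$. The statement then follows by taking $\psi = \varphi$. Using $\omega_P$ with $P$ fixed (rather than $\omega(\psi) = \omega_{P(\psi)}(\psi)$) is essential, since the substituted formula must be the same at every node of the induction. The two models share $W$, $\peq$ and $R$, so only the base cases depend on the valuations.

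\textbf{Base cases.} If $p \in {\rm Var}(\varphi)$, then $\omega_P(p) = p$ and $\val p^{\sf CK}_{\varphi} = \val p$ by definition, so the claim is immediate. For $\bot$ we have $\val\bot^{\sf CK}_\varphi = \val{\beta}^{\Model}$, which is exactly $\val{\omega_P(\bot)}^\Model$. This relies on reading the definition of $\Model^{\sf CK}_\varphi$ as setting the interpretation of $\bot$ itself, and of variables outside ${\rm Var}(\varphi)$, to $\val{\omega_P(\bot)}$; this is the reading under which Lemma~\ref{IsModelWtoC} makes sense, and I will state it explicitly.

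\textbf{Inductive steps.} The map $\omega_P$ commutes with every connective and modality, because it is a substitution. Satisfaction of $\wedge, \vee, \to, \Box, \Diamond, \nec, \ps$ at $w$ is defined uniformly in terms of $\peq$, $R$ and satisfaction of the immediate subformulas at related worlds. Hence each step is a direct application of the induction hypothesis at those related worlds; this is why the hypothesis must be quantified over all worlds. For example, for $\ps$: $(\Model^{\sf CK}_\varphi, w) \Vdash \ps\chi$ iff every $v \seq w$ has some $u$ with $v R^* u$ and $(\Model^{\sf CK}_\varphi, u) \Vdash \chi$. By the induction hypothesis this holds iff $(\Model, u) \Vdash \omega_P(\chi)$ for such $u$, i.e.\ iff $(\Model, w) \Vdash \ps\,\omega_P(\chi) = \omega_P(\ps\chi)$.

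\textbf{Main obstacle.} The only delicate point is the base case for $\bot$: confirming that the satisfaction clause for $\bot$ in $\Model^{\sf CK}_\varphi$ refers to the valuation $\val\bot^{\sf CK}_\varphi = \val\beta^\Model$ and not to the empty set inherited from $\Model$. Once that reading is fixed, the case is an identity. No semantic facts about $\beta$ (persistence, seriality) are needed here; those were already used in Lemma~\ref{IsModelWtoC} to guarantee that $\Model^{\sf CK}_\varphi$ is a genuine $\sf CK$-model.
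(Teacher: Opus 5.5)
Your proposal is correct and follows the paper's proof. The paper also inducts on subformulas with $P$ held fixed: it lets the indexing formula and $P \supseteq \mathrm{Var}\cup\{p_\bot\}$ vary, and your choice of $\Model^{\sf CK}_\varphi$ with $P = P(\varphi)$ is the special case needed. The $\bot$ case holds by definition since $\val\bot^{\sf CK}_{\varphi,P} = \val{\omega_P(\bot)}$, and the remaining cases are immediate because the two models share the same frame.
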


\begin{proof}
We show that for any formula $\phi$ such that $\varphi \in Subf(\phi)$, and any $P \subseteq \mathrm{Prop}$ with $\mathrm{Var}(\varphi) \cup \{p_\bot\} \subseteq P$, $(\Model^{\sf CK}_{\phi, P}, w) \Vdash \varphi$ iff $(\Model, w) \Vdash \omega_P(\varphi)$. The proof proceeds by an easy induction on $\varphi \in \mathcal{L}^*$, whose details are left to the reader. The case of $\bot$ is simply satisfied by definition since $\|\bot\|^{\sf CK}_{\phi, P} = \|\omega_P(\bot)\|$. \qedhere
\end{proof}

Thus, the defined translation preserves validity as intended, so that ${\sf CK}^*$ can be seen as a fragment of ${\sf WK}^*$, allowing us to focus on the latter when e.g.~proving decidability results.

\begin{theorem}
\label{CorrespondenceConstructiveAndWijesekera} $\Vdash_{\sf CK^*} \varphi$ iff $\Vdash_{\sf WK^*} \omega (\varphi)$ for any $\varphi\in\mathcal L^*$.
\end{theorem}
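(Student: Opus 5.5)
The theorem follows by combining Proposition~\ref{ValidityCtoW} (exactness) with Proposition~\ref{ValidityWtoC} (correctness). We prove both directions by contraposition, so each reduces to transferring a falsifying world from one kind of model to the other.

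\emph{From left to right.} Suppose $\omega(\varphi)$ is not ${\sf WK}^*$-valid. Then there are a $\sf WK$-model $\Model$ and a world $w$ with $(\Model,w)\not\Vdash\omega(\varphi)$. By Lemma~\ref{IsModelWtoC}, $\Model^{\sf CK}_\varphi$ is a $\sf CK$-model on the same set of worlds. By Proposition~\ref{ValidityWtoC}, $(\Model^{\sf CK}_\varphi,w)\not\Vdash\varphi$, so $\varphi$ is not ${\sf CK}^*$-valid.

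\emph{From right to left.} Suppose $\varphi$ is not ${\sf CK}^*$-valid. Then there are a $\sf CK$-model $\Model$ and a world $w$ with $(\Model,w)\not\Vdash\varphi$. We may assume $p_\bot$ is not in the language of ${\sf CK}^*$, so it does not occur in $\varphi$. Proposition~\ref{ValidityCtoW} then gives that $\Model^{\sf WK}$ is a $\sf WK$-model and that $(\Model^{\sf WK},w)\not\Vdash\omega(\varphi)$. Hence $\omega(\varphi)$ is not ${\sf WK}^*$-valid.

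Both directions rest entirely on the earlier results, so nothing substantive remains. The one point to check is that the cited propositions are stated for the translation $\omega=\omega_{P(\varphi)}$ with $P(\varphi)={\rm Var}(\varphi)\cup\{p_\bot\}$, which is exactly the translation in the theorem, and that $p_\bot$ does not occur in $\varphi$, which holds by the convention just mentioned. Any real difficulty lies in those propositions themselves: the $\bot$ clause, which needs falsum persistence and falsum seriality, and Lemma~\ref{IsModelWtoC}, which we are taking as given.
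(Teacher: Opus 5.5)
Your proof is correct and follows the paper's own argument. The paper likewise derives the right-to-left direction from Proposition~\ref{ValidityCtoW} and the left-to-right direction from Proposition~\ref{ValidityWtoC}, relying implicitly on Lemma~\ref{IsModelWtoC} to know that $\Model^{\sf CK}_\varphi$ is a $\sf CK$-model; you make that step and the requirement that $p_\bot$ not occur in $\varphi$ explicit.
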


\begin{proof}
The right to left implication is due to Proposition \ref{ValidityCtoW}, while the inverse follows from Proposition \ref{ValidityWtoC}.
\end{proof}

\section{Propositional dynamic logic of regular programs}\label{PDLdefs}

Our next goal is to show that ${\sf WK}^*$ may be viewed as a fragment of classical $\sf PDL$, thus inheriting any `upper bounds', such as {\sc ExpTime} decidability.
In order to do this, let us review the version of $\sf PDL$ we will use.
Its language, $\mathcal{L}^{\sf PDL}$, is given by the following grammar:
\begin{align*}
    \varphi &:=  p\mid \neg\varphi \mid \varphi\land\varphi  \mid \varphi\lor\varphi \mid [\alpha]\varphi\\
    \alpha & :=  a \mid  (\alpha;\alpha)  \mid \alpha^* .
\end{align*}

Here, $p$ ranges over a set of propositional atoms and $a$ over a set of atomic programs $\Pi_0$.
We only need two atomic programs, so we will assume that  $\Pi_0 = \{i,m\}$, corresponding to the intuitionistic and modal relations.
$\sf PDL$ is classical, so, unlike in the constructive setting, we use $\langle\alpha\rangle$ as abbreviation for $\neg[\alpha]\neg$. For a program $\alpha$, we write $\alpha^1$ to denote $\alpha$ itself; this simple convention will allow us to write both $\alpha$ and $\alpha^*$ in the form $\alpha^\dag$ and unify some notation below.
Note that disjunction is not needed but it is convenient to have similar signatures for a nicer translation.

\begin{definition}
A \emph{$\sf PDL$-model} is a triple $\mathfrak{M} = \tuple{W^\mathfrak{M}, \rho^\mathfrak{M}, \|\cdot\|^\mathfrak{M}}$ where $\|\cdot\|^\mathfrak{M} : \mathrm{Prop} \to \mathcal{P}(W^\mathfrak{M})$ is a valuation of the propositional variables and $\rho^\mathfrak{M} : \Pi_0 \to 2^{W^\mathfrak{M} \times W^\mathfrak{M}}$ is an interpretation of atomic programs. We extend $\rho^\mathfrak{M}$ to all programs as $\rho^\mathfrak{M}(\alpha;\beta) := \rho^\mathfrak{M}(\alpha);\rho^\mathfrak{M}(\beta)$ and  $\rho^\mathfrak{M}(\alpha^*) := \big(\rho^\mathfrak{M}(\alpha)\big)^*$.

The valuation can be recursively extended to the language $\mathcal{L}^{\sf PDL}$ as
\begin{enumerate}[label=\textbullet, wide, labelwidth=!, labelindent=0pt, itemsep=0pt]
    \item[] $\|\neg \varphi\|^\mathfrak{M} := W \setminus \|\varphi\|^\mathfrak{M}$;

    \item[] $\|\varphi \wedge \psi\|^\mathfrak{M} := \|\varphi\|^\mathfrak{M} \cap \|\psi\|^\mathfrak{M}$;

    \item[] $\|\varphi \vee \psi\|^\mathfrak{M} := \|\varphi\|^\mathfrak{M} \cup \|\psi\|^\mathfrak{M}$;

    \item[] $\|[\alpha] \varphi\| := \big\{w \in W^\mathfrak{M} \mid \forall v.\  w \rho^\mathfrak{M}(\alpha) v \to v \in \|\varphi\|^\mathfrak{M}\big\}$.
\end{enumerate}
\end{definition}

Then, $[\alpha]$ is the standard modality interpreted using $\rho^\mathfrak{M}(\alpha)$. Likewise, we can easily deduce that $\|\langle \alpha \rangle \varphi\|^\mathfrak{M} := \big\{w \in W^\mathfrak{M} \mid \exists v. \ w \rho^\mathfrak{M}(\alpha) v \ \wedge \ v \in \|\varphi\|^\mathfrak{M}\big\}$. We write $(\mathfrak{M}, w) \Vdash \varphi$ to mean $w \in \|\varphi\|^\mathfrak{M}$. A formula $\varphi$ is \emph{valid} in $\mathfrak{M}$ if $\|\varphi\|^\mathfrak{M} = W^\mathfrak{M}$, and $\Vdash_{\sf PDL} \varphi$ indicates that $\varphi$ is valid in all ${\sf PDL}$-models.

\begin{remark}
    To avoid confusion, we reserve $\mathfrak M$ for classical models and $\mathcal M$ for constructive ones. 
\end{remark}

We now introduce the basic fragment $\sf K^*$ of $\sf PDL$, which we will embed into the master-modality logics to obtain lower bounds on their decidability problem.

\begin{definition}
The basic logic ${\sf K}^*$ is the fragment of ${\sf PDL}$ over the language $\mathcal{L}^{\sf K^*}$ with a single atomic program $a$, whose only programs are $a$ and $a^*$.
\end{definition}

A model $\mathfrak{M}$ is a ${\sf K}^*$-model if it is a ${\sf PDL}$-model over $\mathcal{L}^{\sf K^*}$. As there is a single atomic program, models are simply of the form $\tuple{W, R, \|\cdot\|}$, where $R \subseteq W \times W$ interprets $a$, and $a^*$ is interpreted as the reflexive–transitive closure of $R$. Following the notation for ${\sf PDL}$, we write $\Vdash_{\sf K^*} \varphi$ to indicate that $\varphi \in \mathcal{L}^{\sf K^*}$ is valid in all ${\sf K}^*$-models. Since the only programs in $\mathcal{L}^{\sf K^*}$ are $a$ and $a^*$, we write $\Box$ for $[a]$ and $\nec$ for $[a^*]$. 

The following theorem summarizes well-known results of $\sf PDL$ and $\sf K^*$, which are considered folklore and form the foundation for our main complexity results.

\begin{theorem}[Fischer and Ladner \cite{FISCHER1979194}, Pratt \cite{PRATT1980231}]\label{ValidityPDL}
\leavevmode
\begin{enumerate}
    \item $\sf PDL$ is decidable in {\sc ExpTime} and has the exponential-size model property.
    \item $\sf K^*$ is {\sc ExpTime}-hard.
\end{enumerate}
\end{theorem}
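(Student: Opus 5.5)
Since this theorem is the classical result of Fischer--Ladner and Pratt, I would reconstruct its proof along the standard lines rather than invent a new one. There are three parts: the exponential-size model property via filtration, the {\sc ExpTime} upper bound via Pratt-style elimination of Hintikka sets, and {\sc ExpTime}-hardness of $\sf K^*$ via an encoding of alternating polynomial-space Turing machines.

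\textbf{Exponential-size model property.} For a formula $\varphi\in\mathcal L^{\sf PDL}$, I would first define the Fischer--Ladner closure $FL(\varphi)$. This is the smallest set containing $\varphi$ that is closed under subformulas and under negation (up to double negation), and that satisfies two unfolding rules:
\[
[\alpha;\beta]\psi\in FL(\varphi)\Rightarrow [\alpha][\beta]\psi\in FL(\varphi),\qquad [\alpha^*]\psi\in FL(\varphi)\Rightarrow [\alpha][\alpha^*]\psi\in FL(\varphi).
\]
A counting argument shows $|FL(\varphi)|=O(|\varphi|)$. Given a model $\mathfrak M$ falsifying $\varphi$, identify worlds that agree on $FL(\varphi)$. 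For each atomic program $a$, interpret it on the quotient by the least filtration: $[s]\,\rho'(a)\,[t]$ iff $s'\rho(a)t'$ for some $s'\in[s]$ and $t'\in[t]$. The quotient has at most $2^{O(|\varphi|)}$ worlds.

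The key lemma is the truth lemma for formulas in $FL(\varphi)$, and its main obstacle is the $[\alpha^*]$ case. I would split it into two directions, proved by induction on $\alpha$.

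First, $\rho(\alpha)$ is contained in the lift of $\rho'(\alpha)$. This is immediate, since the least filtration contains the image of each atomic relation and composition and star preserve containment.

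Second, if $[\alpha]\psi\in FL(\varphi)$ holds at $s$ and $[s]\,\rho'(\alpha)\,[t]$, then $\psi$ holds at $t$. For atoms this follows from the definition of the least filtration. For $;$ it follows from the unfolding $[\alpha;\beta]\psi\leftrightarrow[\alpha][\beta]\psi$. For $^*$, I would induct on the length of the $\rho'(\alpha)$-path, using the valid equivalence $[\alpha^*]\psi\leftrightarrow\psi\wedge[\alpha][\alpha^*]\psi$. The point is that $[\alpha^*]\psi$ itself propagates along each step, which is exactly why the closure was defined to contain $[\alpha][\alpha^*]\psi$.

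\textbf{{\sc ExpTime} decidability.} I would use Pratt's elimination procedure. Start from all maximal propositionally consistent subsets of $FL(\varphi)$ (atoms), of which there are $2^{O(|\varphi|)}$. Put an atomic $a$-edge from $A$ to $B$ whenever $[a]\psi\in A$ implies $\psi\in B$. Then repeatedly delete any atom $A$ that contains some $\langle\alpha\rangle\psi$ with no remaining atom containing $\psi$ reachable via a $\rho(\alpha)$-path of atoms. Reachability along $\alpha^*$ is ordinary graph reachability. Each round costs time polynomial in the number of atoms, and there are at most exponentially many rounds. Finally, $\neg\varphi$ is satisfiable iff some surviving atom contains it. Soundness of this check uses the filtration lemma, and completeness uses the fact that the surviving atoms form a model, by the same truth-lemma argument.

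\textbf{{\sc ExpTime}-hardness of $\sf K^*$.} I would reduce acceptance for alternating Turing machines with polynomial space, using $\mathrm{APSPACE}=\mathrm{EXPTIME}$. Each world encodes a configuration: propositions $c_{j,\sigma}$ record the tape content of cell $j\le n^k$, and further propositions record the head position and the state. A formula $\theta$ uses $\Box$ and $\Diamond$ to force the following structure: at an existential state some $a$-successor is a legal successor configuration, while at a universal state every legal successor is present as an $a$-successor. Frame conditions ensure that all $a$-successors are legal. Then the machine rejects iff $\mathrm{init}\wedge\nec\theta\wedge\nec\neg\mathrm{acc}$ is satisfiable. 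The delicate direction is showing that from a rejecting strategy one can build a model, and conversely that any model witnesses a non-accepting run tree; this is where I expect the care to lie. Since $\theta$ has polynomial size, the reduction is polynomial, so validity in $\sf K^*$ is {\sc ExpTime}-hard.
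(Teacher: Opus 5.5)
The paper gives no proof of this theorem; it quotes Fischer--Ladner and Pratt as folklore. Reconstructing the standard route is therefore appropriate, and your filtration argument for the exponential-size model property is correct. Two of your steps fail as written, however.

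\textbf{Elimination procedure.} As stated it is unsound. Taking atoms to be maximal \emph{propositionally} consistent subsets of $FL(\varphi)$ treats $[\alpha^*]\psi$, $[\alpha][\alpha^*]\psi$ and $\psi$ as unrelated letters. Your edges only constrain boxes $[a]\psi$ with $a$ atomic, so nothing enforces boxes over composite programs. Take $\varphi=\neg[a^*]p\vee p$. The atom containing $\neg\varphi$, $[a^*]p$, $[a][a^*]p$ and $\neg p$ is propositionally consistent and contains no formula of the form $\neg[\alpha]\chi$. It is therefore never deleted, and your test declares the valid formula $\varphi$ non-valid. Your appeal to ``the same truth-lemma argument'' breaks exactly at the $[\alpha;\beta]$ and $[\alpha^*]$ cases. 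That argument relied on $[\alpha;\beta]\psi\leftrightarrow[\alpha][\beta]\psi$ and $[\alpha^*]\psi\leftrightarrow\psi\wedge[\alpha][\alpha^*]\psi$, which hold at worlds of a model but not in arbitrary atoms. The fix is to use Hintikka sets, i.e.\ additionally require $[\alpha;\beta]\psi\in A$ iff $[\alpha][\beta]\psi\in A$, and $[\alpha^*]\psi\in A$ iff both $\psi\in A$ and $[\alpha][\alpha^*]\psi\in A$. Realized atoms still satisfy these conditions, so they are never deleted. In the surviving graph, the box direction of the truth lemma then follows by propagating $[\alpha^*]\psi$ along $\alpha$-paths, and the deletion step secures the diamond direction.

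\textbf{Hardness reduction.} The alternation is the wrong way round for the equivalence you claim. Your $\theta$ demands some legal successor at existential configurations and all of them at universal ones. A model of $\mathrm{init}\wedge\nec\theta\wedge\nec\neg\mathrm{acc}$ is then a single run tree, with the existential choices already made, that never accepts. Rejection instead says that \emph{every} run tree fails to accept. Concretely, consider a purely existential machine whose initial configuration has one successor that accepts and another that loops forever. It accepts, yet your formula is satisfiable along the loop.

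The fix is to dualize $\theta$: all legal successors at existential configurations, some legal successor at universal ones, with the usual convention that universal configurations without successors accept. The configurations encoded at worlds reachable from the root then form a set $S$ with these properties: no member is in an accepting state, $S$ contains all successors of its existential members, and it contains some successor of each universal member. Induction on the stages of the least fixed point defining acceptance shows that the machine accepts from no member of $S$. Conversely, take the configurations from which the machine does not accept, choosing one such successor at each universal configuration; this gives a model. Alternatively, keep your pattern but characterize acceptance by $\nec(\mathrm{halt}\to\mathrm{acc})$, after adding an exponential step counter so that every branch halts. The counter is needed because ${\sf K}^*$ cannot by itself force every branch to reach $\mathrm{acc}$.
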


As a corollary, both $\sf PDL$ and $\sf K^*$ are {\sc ExpTime}-complete.

\section{Translating ${\sf WK}^*$ into $\sf PDL$}
\label{sec:WKtoPDL}

Next, we define our G\"odel-Tarski translation of ${\sf WK}^*$ into $\sf PDL$ which, as we have noted, yields a translation of ${\sf CK}^*$ into $\sf PDL$ by pre-composing with embedding $\omega$ of the previous Section \ref{secCW}.
Recall that in the original G\"odel-Tarski translation of intuitionistic propositional logic into $\sf S4$~\cite{GodelTranslation}, each variable $p$ is replaced by $\Box p$ and each implication $\varphi\to\psi$ is replaced by $\Box(\varphi'\to\psi')$ (or, equivalentley, $\Box(\neg\varphi'\vee\psi')$, to avoid overloading the implication symbol), where $\cdot'$ denotes the recursive application of the translation.
In the $\sf PDL$ setting, we will use $\rho(i)$ to interpret the intuitionistic preorder, but as this is not assumed to be transitive and reflexive in the $\sf PDL$ semantics, so we use $[i^*]$ instead of $\Box$.

So far, the translation follows the pattern of that in~\cite{degen}, but the modalities require some additional care to ensure upwards persistence.
The constructive modality $\Diamond$ becomes $[i^*]\langle m\rangle $, which corresponds precisely to the constructive semantics if we interpet $\peq$ as $\rho(i^*)$ and $\R$ as $\rho(m)$.
The treatment of $\ps$ is similar, but using $[i^*]\langle m^*\rangle $ instead.

We could similarly replace $\Box$ by $[i^*][m] $, but the $\sf PDL$ syntax allows us to instead write $[i^*;m] $.
While the distinction is inessential, this is no longer the case for $\nec$, which {\em must} be interpreted as $[ (i^*;m )^*] $ (or an equivalent program).
Aside from this, we have not included $\bot$ in the classical language, but can readily define it by $p_\bot\wedge\neg p_\bot$; note that in this case, it does not matter whether $p_\bot$ appears elsewhere in the formula.
We thus arrive at the following translation.

\begin{definition}
We define the translation $\tau : \mathcal{L}^* \to \mathcal{L}^{\sf PDL}$ of the intuitionistic language with master modalities into the language of $\sf PDL$ by
\begin{enumerate}[label=\textbullet, wide, labelwidth=!, labelindent=0pt, itemsep=0pt]

\item $\tau(\bot ) := p_\bot\wedge\neg p_\bot$;

\item $\tau(p) := [i^*]p$;

\item $\tau(\varphi\odot\psi) := \tau(\varphi)\odot \tau(\psi)$ for $\odot\in \{\wedge,\vee\}$;

\item $\tau(\varphi\to\psi) := [i^*] (\neg \tau(\varphi)\vee \tau(\psi))$;

\item $\tau(\blacksquare \varphi) := [(i^*;m)^\dag] \tau(\varphi)$ for $\blacksquare \in \{\nec, \Box\}$, where $\dag = *$ if $\blacksquare = \nec$ and $\dag = 1$ if $\blacksquare = \Box$;

\item $\tau(\blacklozenge \varphi) := [i^*] \langle m^\dag \rangle \tau(\varphi)$ for $\blacklozenge \in \{\ps, \Diamond\}$, where $\dag = *$ if $\blacklozenge = \ps$ and $\dag = 1$ if $\blacklozenge = \Diamond$.
\end{enumerate}
\end{definition}

The translation is only linear on the size of the formula.

\begin{proposition}
\label{PolynomialTranslation}
$|\tau(\varphi)|$ is linear on $|\varphi|$ for any $\varphi \in \mathcal{L}^*$.
\end{proposition}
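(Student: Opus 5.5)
The plan is a structural induction on $\varphi\in\mathcal L^*$, proving an explicit bound of the form $|\tau(\varphi)|\le c\cdot|\varphi|$ for a fixed constant $c$. To make this precise, I first fix a size measure. For $\mathcal L^*$, $|\varphi|$ counts occurrences of symbols: atoms, $\bot$, connectives and modal operators. For $\mathcal L^{\sf PDL}$, $|\psi|$ counts atoms, Boolean connectives, and the symbols of programs occurring inside box operators, so $[(i^*;m)^*]$ contributes a constant number of symbols (say $6$). Parentheses can be counted too; they only change the constants.

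The key observation is that each clause of $\tau$ is \emph{compositional without duplication}. Every clause introduces a constant-size context, and each immediate subformula's translation occurs exactly once in that context. Concretely:
\begin{itemize}
\item $|\tau(\bot)|=4$ and $|\tau(p)|=4$.
\item $|\tau(\varphi\odot\psi)|=|\tau(\varphi)|+|\tau(\psi)|+1$.
\item $|\tau(\varphi\to\psi)|=|\tau(\varphi)|+|\tau(\psi)|+k_\to$, with $k_\to$ a constant covering $[i^*]$, $\neg$ and $\vee$.
\item For the modal clauses, $|\tau(\blacksquare\varphi)|=|\tau(\varphi)|+k_\blacksquare$ and $|\tau(\blacklozenge\varphi)|=|\tau(\varphi)|+k_\blacklozenge$. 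This holds because $\langle m^\dag\rangle$ abbreviates $\neg[m^\dag]\neg$, which adds only a constant.
\end{itemize}

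Let $c$ be the maximum of these finitely many constants, together with $4$ for the base cases. The induction then goes as follows. In the base case, $|\tau(\varphi)|\le c=c|\varphi|$. For binary connectives, $|\tau(\varphi\odot\psi)|\le c|\varphi|+c|\psi|+c=c|\varphi\odot\psi|$. The unary modal cases are handled the same way. This gives $|\tau(\varphi)|\le c|\varphi|$. For the lower bound we have $|\tau(\varphi)|\ge|\varphi|$, since each symbol of $\varphi$ produces at least one symbol, so the translation is genuinely linear.

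There is no real obstacle here. The only point needing care is the one that separates $\tau$ from translations that blow up exponentially: no clause copies a recursive translation twice. In particular, the implication clause uses $\neg\tau(\varphi)\vee\tau(\psi)$ rather than something like $(\tau(\varphi)\to\tau(\psi))\wedge(\dots)$, and $\bot$ is a fixed-size formula. Contrast this with $\omega$, where substituting $\nec(\bigwedge P\wedge\Diamond p_\bot)$ for each $\bot$ gives only polynomial size, as in Proposition~\ref{PolynomialOmega}. So I would simply verify each clause in the definition of $\tau$ against this single-occurrence property and record the constants.
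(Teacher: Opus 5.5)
Your proposal is correct and follows essentially the same route as the paper: a structural induction on $\varphi$ in which each clause of $\tau$ adds only a constant-size context and uses each subtranslation exactly once, so $|\tau(\varphi)|\le c\,|\varphi|$. The paper fixes $c=4$ under its size convention (each operation adds at most three symbols), whereas you take $c$ to be the maximum of the clause constants, which is equally valid and less dependent on how program symbols are counted.
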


\begin{proofidea}
We can show $|\tau(\varphi)| \leq 4 \cdot |\varphi|$ by an easy induction on $\varphi \in \mathcal{L}^*$, noting that each operation increases size by at most three. \qedhere
\end{proofidea}

We now show that the translation is exact by transforming any $\sf WK$-model into a $\sf PDL$ model preserving truth and falsity.
To be precise, given a $\sf WK$-model $\Model = \tuple{W,\peq,\R,\val\cdot}$, we define $\Model^{\sf PDL}$ to be the $\sf PDL$ model over the set of atomic programs $\Pi_0 = \{i,m\}$ given by $\mathcal{M}^{\sf PDL} := \tuple{W, \rho, \val\cdot}$ such that $\rho(i) := {\peq}$ and $\rho(m) := {\R}$. Hence, exactness is established by a straightforward induction on the formula, using the defined model. Full technical details are provided in the Appendix.

\begin{proposition}
\label{propExactnessWKtoDPL}
Let $\Model$ be a $\sf WK$-model and $\varphi\in\mathcal L^*$.
Then, $(\Model,w) \Vdash\varphi$ iff $(\Model^{\sf PDL},w) \Vdash \tau (\varphi)$.
\end{proposition}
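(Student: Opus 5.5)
The proof goes by structural induction on $\varphi\in\mathcal L^*$, proving the equivalence simultaneously for all worlds $w$. Two facts about $\Model^{\sf PDL}$ are needed first. Since $\peq$ is a preorder, $\rho(i^*)=(\peq)^*={\peq}$. Hence $\rho(i^*;m)={(\peq;\R)}$ and $\rho((i^*;m)^*)=(\peq;\R)^*$, and likewise $\rho(m^*)=\R^*$. These identities let every $\sf PDL$ clause be read directly as the corresponding constructive clause.

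The base cases are as follows. For $\bot$: $\Model$ is infallible, so $\val\bot=\varnothing$, and $\tau(\bot)=p_\bot\wedge\neg p_\bot$ is false everywhere in $\Model^{\sf PDL}$. For $p$: $(\Model^{\sf PDL},w)\Vdash[i^*]p$ iff every $v\seq w$ lies in $\val p$. By atomic persistence this holds iff $w\in\val p$; the converse direction uses reflexivity of $\peq$.

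The inductive cases are these.
\begin{itemize}
\item Conjunction and disjunction are immediate from the induction hypothesis.
\item For $\varphi\to\psi$: $[i^*](\neg\tau(\varphi)\vee\tau(\psi))$ holds at $w$ iff for all $v\seq w$, $v\Vdash\tau(\varphi)$ implies $v\Vdash\tau(\psi)$. By the induction hypothesis at each such $v$, this is exactly the intuitionistic clause.
\item For $\Box$ and $\nec$: by the identities above, $[(i^*;m)^\dag]\tau(\varphi)$ quantifies over exactly the $v$ with $w(\peq;\R)v$, respectively $w(\peq;\R)^*v$. These are the $v$ appearing in the semantic clauses, so the induction hypothesis finishes the case.
\item For $\Diamond$ and $\ps$: $[i^*]\langle m^\dag\rangle\tau(\varphi)$ holds iff for every $v\seq w$ there is $u$ with $v\R u$ (respectively $v\R^*u$) and $u\Vdash\tau(\varphi)$. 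By the induction hypothesis this matches the clause verbatim.
\end{itemize}

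The only real care point is the computation of $\rho$ on composite and starred programs. In particular, $\rho((i^*;m)^*)$ must coincide with the relation $(\peq;\R)^*$ used for $\nec$, and not, say, with $(\peq;\R^*)^*$. The equivalence $(\peq;\R)^*$ versus $(\peq;\R^*)^*$ from Lemma~\ref{AlternativeSemDiamondBox} is not needed here, since the translation matches the primary clause directly. Reflexivity and transitivity of $\peq$ are what make $\rho(i^*)={\peq}$, and this is the one place where the frame conditions enter besides atomic persistence and infallibility.
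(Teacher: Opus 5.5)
Your proof is correct and follows the paper's argument. The paper also uses a structural induction on $\varphi$, reading $\rho(i^*)$ as $\peq$ so that $\rho((i^*;m)^\dag)=(\peq;\R)^\dag$ and $\rho(m^\dag)=\R^\dag$, and it handles the variable case by atomic persistence plus reflexivity; you simply spell out the $\bot$, connective and diamond cases that the paper leaves to the reader.
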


Now we must perform the opposite operation in order to show that our translation is correct; namely, show how a $\sf PDL$ model can be transformed into a $\sf WK$ one.
Let $\mathfrak{M} = \tuple{W,\rho, \val\cdot}$ be a $\sf PDL$-model over the set of atomic programs $\Pi_0 = \{i,m\}$. We define the $\sf WK$-model $\mathfrak{M}^{\sf WK} :=\tuple{W,\rho(i^*),\rho(m),\val\cdot^{\sf WK}}$, for which the valuation is defined as $\|p\|^{\sf WK} := \|[i^*] p\|$ for $p \in \mathrm{Prop}$, and $\|\bot\|^{\sf WK} := \emptyset$. As in the previous result, correctness follows by a straightforward induction on the considered formula; the technical details are deferred to the Appendix.

\begin{proposition}\label{propCorrectnessWKtoPDL}
If $\mathfrak{M}$ is a $\sf PDL$-model over the set of atomic programs $\Pi_0 = \{i,m\}$, then $\mathfrak{M}^{\sf WK}$ is a  $\sf WK$-model and for any formula $\varphi \in \mathcal{L}^*$ and $w\in W^\mathfrak{M}$, $
(\mathfrak{M}^{\sf WK},w)\Vdash  \varphi $ iff $(\mathfrak{M},w)\Vdash \tau(\varphi)$.
\end{proposition}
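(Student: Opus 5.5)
We first check that $\mathfrak{M}^{\sf WK}$ is a $\sf WK$-model and then prove the truth equivalence by induction on $\varphi$.

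\emph{$\mathfrak{M}^{\sf WK}$ is a $\sf WK$-model.} The relation $\rho(i^*) = \rho(i)^*$ is a reflexive-transitive closure, hence a preorder. The relation $\rho(m)$ may be arbitrary. Since $\|\bot\|^{\sf WK}=\varnothing$, the conditions of atomic ex falso, falsum persistence and falsum seriality hold vacuously, and the model is infallible. For atomic persistence, suppose $w\in\|[i^*]p\|$ and $w\,\rho(i^*)\,v$. By transitivity of $\rho(i^*)$, every $\rho(i^*)$-successor of $v$ is one of $w$, so $v\in\|[i^*]p\|$.

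\emph{An auxiliary fact.} Every $\tau(\psi)$ is upward closed along $\rho(i^*)$ in $\mathfrak{M}$. We could prove this by induction, but the main induction gives it for free: once $(\mathfrak{M}^{\sf WK},w)\Vdash\psi$ iff $(\mathfrak{M},w)\Vdash\tau(\psi)$, truth persistence in $\mathfrak{M}^{\sf WK}$ (the Remark) transfers. It turns out the induction does not even need this fact, because each clause of $\tau$ already carries the $[i^*]$ prefix that matches the constructive semantics.

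\emph{The main induction.} Write $\peq=\rho(i^*)$ and $R=\rho(m)$.
\begin{itemize}
\item Case $\bot$: both sides are false everywhere.
\item Case $p$: this holds by the definition $\|p\|^{\sf WK}=\|[i^*]p\|$.
\item Cases $\wedge$ and $\vee$: these are immediate from the induction hypothesis.
\item Case $\to$: $w\Vdash\varphi\to\psi$ iff for all $v\seq w$, $v\Vdash\varphi$ implies $v\Vdash\psi$. By the induction hypothesis this is $\forall v\,(w\rho(i^*)v \Rightarrow v\in\|\neg\tau(\varphi)\vee\tau(\psi)\|)$, which is exactly $[i^*](\neg\tau(\varphi)\vee\tau(\psi))$.
\item Case $\Box$: the relation $(\peq;R)$ is literally $\rho(i^*;m)$, so the clause matches directly.
\item Case $\nec$: the relation $(\peq;R)^*$ equals $\rho((i^*;m)^*)$ by the definition of $\rho$ on composite programs.
\item Case $\Diamond$: for all $v$ with $w\rho(i^*)v$ there is $u$ with $v\rho(m)u$ and $u\Vdash\varphi$. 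By the induction hypothesis this is $[i^*]\langle m\rangle\tau(\varphi)$.
\item Case $\ps$: here $R^*=\rho(m)^*=\rho(m^*)$, which gives $[i^*]\langle m^*\rangle\tau(\varphi)$.
\end{itemize}

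There is no real obstacle. The only points needing care are (i) that the intuitionistic relation of $\mathfrak{M}^{\sf WK}$ is $\rho(i^*)$ rather than $\rho(i)$, so that the preorder requirement holds and the $[i^*]$ prefixes match $\peq$ exactly, and (ii) the identity $\rho((i^*;m)^*)=(\rho(i)^*;\rho(m))^*$ used in the $\nec$ case. Both follow directly from the definition of the extension of $\rho$ to composite programs.
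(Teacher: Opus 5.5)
Your proof is correct and follows the same route as the paper: check the $\sf WK$-model conditions, where only the preorder requirement and atomic persistence need any argument, then do a structural induction in which each clause of $\tau$ matches the constructive clause once $\peq$ is read as $\rho(i^*)$ and $R$ as $\rho(m)$. You spell out every case where the paper details only $\Diamond$ and $\ps$. Your transitivity argument for atomic persistence is also the right justification, whereas the paper's stated inclusion $\|p\|\subseteq\|[i^*]p\|$ does not hold in a general $\sf PDL$-model.
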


Putting together the correctness and exactness of $\tau$, we conclude that $\tau$ preserves validity.

\begin{theorem}
\label{WKiffPDL}
$\Vdash_{\sf WK^*} \varphi$ iff $\Vdash_{\sf PDL} \tau (\varphi)$ for any $\varphi\in\mathcal L^*$.
\end{theorem}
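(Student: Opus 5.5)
The theorem follows by combining the two preceding propositions, arguing each direction by contraposition.

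For the left-to-right direction, suppose $\not\Vdash_{\sf PDL}\tau(\varphi)$. Then there is a $\sf PDL$-model $\mathfrak M$ and a world $w$ with $(\mathfrak M,w)\not\Vdash\tau(\varphi)$. We may assume $\mathfrak M$ is over $\Pi_0=\{i,m\}$, since $\tau(\varphi)$ mentions only these programs and any other atomic programs can be ignored. Proposition~\ref{propCorrectnessWKtoPDL} gives two facts:
\begin{itemize}
\item $\mathfrak M^{\sf WK}$ is a $\sf WK$-model;
\item $(\mathfrak M^{\sf WK},w)\not\Vdash\varphi$.
\end{itemize}
Hence $\not\Vdash_{\sf WK^*}\varphi$.

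For the right-to-left direction, suppose $\not\Vdash_{\sf WK^*}\varphi$. Then some $\sf WK$-model $\Model$ and world $w$ satisfy $(\Model,w)\not\Vdash\varphi$. By Proposition~\ref{propExactnessWKtoDPL}, $(\Model^{\sf PDL},w)\not\Vdash\tau(\varphi)$. Since $\Model^{\sf PDL}$ is a $\sf PDL$-model, $\not\Vdash_{\sf PDL}\tau(\varphi)$.

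The only points needing care lie inside the cited propositions, which we are taking as given. In Proposition~\ref{propCorrectnessWKtoPDL} these are that $\rho(i^*)$ is a preorder, that the valuation $\|[i^*]p\|$ is upward persistent, and that $\|\bot\|=\varnothing$ makes the falsum conditions vacuous. In Proposition~\ref{propExactnessWKtoDPL}, $\rho(i^*)$ coincides with $\peq$ because $\peq$ is already reflexive and transitive. So there is no real obstacle at this stage: it is pure bookkeeping, with the translation $\tau$ preserving falsifiability in both directions.
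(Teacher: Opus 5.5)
Your proof is correct and matches the paper's. The paper likewise derives the left-to-right direction from Proposition~\ref{propCorrectnessWKtoPDL} and the right-to-left direction from Proposition~\ref{propExactnessWKtoDPL}; you just spell each direction out explicitly in contrapositive form.
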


\begin{proof}
The left to right implication holds by Proposition \ref{propCorrectnessWKtoPDL}, while the inverse follows by Proposition \ref{propExactnessWKtoDPL}.
\end{proof}

\subsection{Upper bounds for the master-modaly logics}\label{sec:UpperBoundsMML}

The above results show that $\tau$ is a `proper G\"odel-Tarski translation', and thus $\sf WK^*$ may be seen as a fragment of $\sf PDL$.
The main advantage of having such a translation is that it allows us to import many results from the classical target logic into the intuitionistic source logic. In particular, we inherit the following properties of $\sf WK^*$ directly from $\sf PDL$, and extend them to $\sf CK^*$ using the translation of Section \ref{secCW}.

\begin{theorem}
\label{theoEXPTIME}
Validity of ${\sf WK}^*$, $\sf CK^*$ and ${\sf CK}^*_\Box$ is in {\sc ExpTime}.
\end{theorem}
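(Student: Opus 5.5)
The plan is to reduce each validity problem to $\sf PDL$ validity by a polynomial-time translation and then invoke Theorem~\ref{ValidityPDL}.

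\textbf{Step 1: ${\sf WK}^*$.} Given $\varphi\in\mathcal L^*$, compute $\tau(\varphi)$. This is done by a single recursive pass over $\varphi$, so it takes polynomial (indeed linear) time. Its output has size at most $4|\varphi|$ by Proposition~\ref{PolynomialTranslation}. By Theorem~\ref{WKiffPDL}, $\Vdash_{\sf WK^*}\varphi$ iff $\Vdash_{\sf PDL}\tau(\varphi)$. Running the {\sc ExpTime} decision procedure for $\sf PDL$ from Theorem~\ref{ValidityPDL} on an input of linear size gives time $2^{O(|\varphi|^k)}$ for some fixed $k$. Hence ${\sf WK}^*$ is in {\sc ExpTime}.

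\textbf{Step 2: ${\sf CK}^*$.} Given $\varphi$, first compute $\omega(\varphi)$ and then $\tau(\omega(\varphi))$. Computing $\omega(\varphi)$ requires collecting ${\rm Var}(\varphi)$, fixing the designated variable $p_\bot$, and substituting $\nec(\bigwedge P(\varphi)\wedge\Diamond p_\bot)$ for each occurrence of $\bot$. This is polynomial time, and the output size is polynomial by Proposition~\ref{PolynomialOmega}. Composing with $\tau$ keeps the size polynomial, since $\tau$ is linear. Theorems~\ref{CorrespondenceConstructiveAndWijesekera} and~\ref{WKiffPDL} together give
\[
\Vdash_{\sf CK^*}\varphi \iff \Vdash_{\sf WK^*}\omega(\varphi) \iff \Vdash_{\sf PDL}\tau(\omega(\varphi)).
\]
The $\sf PDL$ procedure then runs in time exponential in a polynomial of $|\varphi|$, which is still {\sc ExpTime}.

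\textbf{Step 3: ${\sf CK}^*_\Box$.} For $\varphi\in\mathcal L^*_\Box$, validity over $\sf CK$-models is by definition membership in ${\sf CK}^*$. So we can simply run the Step 2 procedure. Alternatively, Proposition~\ref{EqCKWKBox} shows that this coincides with ${\sf WK}^*$-validity, so $\tau$ can be applied directly.

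The only point needing care, rather than a real obstacle, is that "{\sc ExpTime}" must survive polynomial blow-up. Writing $n=|\varphi|$, the reduced instance has size $p(n)$, and the decision time is $2^{q(p(n))}$, which is again of the form $2^{\mathrm{poly}(n)}$. One should also note explicitly that $\omega$ is computable in polynomial time, not merely of polynomial size. This is immediate, since $\omega$ is a substitution by a formula of size $O(n)$ at at most $n$ positions.
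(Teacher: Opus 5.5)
Your proof is correct and follows the same route as the paper. You reduce ${\sf WK}^*$ to $\sf PDL$ via $\tau$ (Theorem~\ref{WKiffPDL}), handle ${\sf CK}^*$ by precomposing with $\omega$ (Theorem~\ref{CorrespondenceConstructiveAndWijesekera}), and treat ${\sf CK}^*_\Box$ as a fragment; your remarks on polynomial-time computability of $\omega$ and $\tau$ and on {\sc ExpTime} surviving polynomial blow-up just make explicit what the paper leaves implicit.
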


\begin{proof}
For ${\sf WK}^*$, the claim follows from Theorem~\ref{WKiffPDL}.  
For ${\sf CK}^*$, the same {\sc ExpTime} bound follows via composition with translation $\omega$ by Theorem \ref{CorrespondenceConstructiveAndWijesekera}, and immediately also for the fragment ${\sf CK}^*_\Box$.
\end{proof}

Given that the transformation $\tau$ is linear on the size of the formula (Proposition \ref{PolynomialTranslation}), $\sf WK^*$ directly inherits the following result from $\sf PDL$. The same extends to ${\sf CK}^*$ via the embedding $\omega$, that is shown polynomial on the size of the formula (Proposition \ref{PolynomialOmega}).

\begin{theorem}
\label{FMP}
${\sf WK}^*$, ${\sf CK}^*$, and ${\sf CK}^*_\Box$ have the exponential-size model property.
\end{theorem}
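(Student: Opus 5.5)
The plan is to transfer models along the translations, using the explicit model constructions from Sections~\ref{secCW} and~\ref{sec:WKtoPDL} rather than validity preservation alone. I take the exponential-size model property to mean the following. If $\varphi$ is falsifiable in the relevant class of models, then it is falsifiable in a model of that class whose size is at most $2^{p(|\varphi|)}$ for some fixed polynomial $p$.

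\textbf{Step 1: ${\sf WK}^*$.} Suppose $\varphi$ is falsified at $w$ in some ${\sf WK}$-model $\Model$.
\begin{enumerate}
\item By Proposition~\ref{propExactnessWKtoDPL}, $\tau(\varphi)$ is falsified at $w$ in $\Model^{\sf PDL}$.
\item By Theorem~\ref{ValidityPDL}, there is a ${\sf PDL}$-model $\mathfrak{M}$ falsifying $\tau(\varphi)$ with $|W^{\mathfrak M}| \le 2^{O(|\tau(\varphi)|)}$. Proposition~\ref{PolynomialTranslation} says $|\tau(\varphi)|$ is linear in $|\varphi|$, so this bound is $2^{O(|\varphi|)}$.
\item Apply Proposition~\ref{propCorrectnessWKtoPDL}. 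The model $\mathfrak{M}^{\sf WK}$ is a ${\sf WK}$-model on the same set of worlds, and it falsifies $\varphi$ at the same world.
\end{enumerate}
Since $\mathfrak{M}^{\sf WK}$ has exactly the worlds of $\mathfrak{M}$, its size bound is inherited directly.

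\textbf{Step 2: ${\sf CK}^*$.} Suppose $\varphi$ is falsified in a ${\sf CK}$-model $\Model$.
\begin{enumerate}
\item By Proposition~\ref{ValidityCtoW}, $\omega(\varphi)$ is falsified in the ${\sf WK}$-model $\Model^{\sf WK}$.
\item By Step 1, there is a ${\sf WK}$-model $\mathcal N$ of size $2^{O(|\omega(\varphi)|)}$ falsifying $\omega(\varphi)$. By Proposition~\ref{PolynomialOmega}, this size is $2^{\mathrm{poly}(|\varphi|)}$.
\item By Lemma~\ref{IsModelWtoC} and Proposition~\ref{ValidityWtoC}, $\mathcal N^{\sf CK}_\varphi$ is a ${\sf CK}$-model on the same worlds as $\mathcal N$, and it falsifies $\varphi$.
\end{enumerate}

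\textbf{Step 3: ${\sf CK}^*_\Box$.} Let $\varphi \in \mathcal L^*_\Box$ be falsifiable in some ${\sf CK}$-model. Step 2 yields an exponential-size ${\sf CK}$-model falsifying $\varphi$, which settles the claim if ${\sf CK}^*_\Box$ is read over ${\sf CK}$-models. If the property is wanted for ${\sf WK}$-models, use Proposition~\ref{EqCKWKBox}: $\varphi$ is then falsifiable in a ${\sf WK}$-model, and Step 1 applies directly.

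\textbf{Main obstacle.} There is no real difficulty, but two points need care.

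First, the back-translations must preserve the underlying set of worlds.
\begin{itemize}
\item In $\mathfrak M^{\sf WK}$, the intuitionistic relation is replaced by $\rho(i^*)$ and the valuation is modified, but the worlds are unchanged.
\item In $\mathcal N^{\sf CK}_\varphi$, only the valuation changes.
\end{itemize}
So size is unaffected.

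Second, the exponent must stay polynomial after composing translations. It is linear for $\tau$, and polynomial for $\tau\circ\omega$ by Proposition~\ref{PolynomialOmega}. If the ${\sf PDL}$ bound is stated in terms of the Fischer--Ladner closure, note that this closure is linear in the formula size, so the conclusion is the same.
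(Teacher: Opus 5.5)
Your proposal is correct and follows the paper's own argument: push a falsifying model through $\tau$ (and $\omega$ for ${\sf CK}^*$), shrink it with the exponential-size model property of ${\sf PDL}$, and pull it back via the world-preserving constructions $\mathfrak{M}^{\sf WK}$ and $\mathcal N^{\sf CK}_\varphi$, with the polynomial bounds on the translations keeping the exponent polynomial. Citing Propositions~\ref{ValidityCtoW} and~\ref{ValidityWtoC} directly, rather than only Theorem~\ref{CorrespondenceConstructiveAndWijesekera} as the paper does, is the right justification for the ${\sf CK}^*$ step, since a bare validity equivalence does not by itself control model size.
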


\begin{proof}
For ${\sf WK}^*$, if $\varphi \in \mathcal{L}^*$ is satisfiable, then $\tau(\varphi)$ is satisfiable in ${\sf PDL}$ (Theorem~\ref{WKiffPDL}). By the exponential-size model property of ${\sf PDL}$, $\tau(\varphi)$ has a model of size at most exponential in $|\tau(\varphi)|$, which yields a ${\sf WK}$-model of the same size satisfying $\varphi$ (Proposition~\ref{propCorrectnessWKtoPDL}).  
For ${\sf CK}^*$, the result follows via composition with $\omega$ using Theorem~\ref{CorrespondenceConstructiveAndWijesekera}, and the same bound holds immediately for ${\sf CK}^*_\Box$.
\end{proof}

\section{Translating $\sf K^*$ into ${\sf CK}^*_\Box$}\label{sec:KtoCKbox}

In order to conclude the {\sc ExpTime}-completeness of our master-modality logics via the chain of embeddings established in the previous sections, we now provide a translation of the fragment $\sf K^*$ of $\sf PDL$ into the $\Diamond$-free master-modality logic $\sf CK^*_\Box$.

The key step is to define a translation 
\(\iota : \mathcal{L}^{\sf K^*} \to \mathcal{L}^*_\Box\) which ensures that if a formula is valid in a classical model, then its translation is valid in the corresponding constructive model. To achieve this, we use a `converse G\"odel-Tarski'-style translation that internalizes classical reasoning: \(\iota(\varphi)\) holds in a constructive model precisely when, assuming the excluded middle for all subformulas of \(\varphi\) at every accessible world, then \(\varphi\) itself holds.

\begin{definition}
The translation $\iota : \mathcal{L}^{\sf K^*} \to \mathcal{L}^*_\Box$ is defined as $\iota(\varphi) := \nec \bigwedge \{\psi \vee \neg \psi \mid \psi \in Subf(\varphi)\} \to \varphi $ for any $ \varphi \in \mathcal{L}^{\sf K^*}$.
\end{definition}

Note that negation $\neg \psi$ is interpreted as $\psi \to \bot$ in the language $\mathcal{L}^*_\Box$. Moreover, formulas in $\mathcal{L}^{\sf K^*}$ must have all diamonds expanded before translation, ensuring that the resulting intuitionistic formula lies within the intended fragment.

As for previous translations, it is key that our embedding is polynomial on the size of the formula.

\begin{proposition}
\label{PolynomialTranslationiota}
$|\iota(\varphi)|$ is polynomial on $|\varphi|$ for any $\varphi \in \mathcal{L}^{\sf K^*}$.
\end{proposition}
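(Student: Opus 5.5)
The plan is to bound $|\iota(\varphi)|$ directly from its definition. We have $\iota(\varphi) = \nec \bigwedge \{\psi \vee \neg\psi \mid \psi \in Subf(\varphi)\} \to \varphi$, once $\varphi$ has been preprocessed into $\mathcal L^*_\Box$. The preprocessing replaces classical $\neg$ by $\cdot\to\bot$ and expands diamonds: $\Diamond\chi$ becomes $\neg\Box\neg\chi$, and $\ps\chi$ becomes $\neg\nec\neg\chi$. First I check that this step is linear. Each occurrence of a connective is replaced by a bounded number of symbols, so the preprocessed formula $\varphi'$ satisfies $|\varphi'| \le c\cdot|\varphi|$ for a small constant $c$, say $c=5$. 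A short induction on $\varphi$ gives this bound. I will write $n := |\varphi'|$.

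Next I bound the number and size of the disjuncts. A standard induction shows $|Subf(\varphi')| \le n$, since each subformula occurrence corresponds to a position in the formula tree. Every $\psi \in Subf(\varphi')$ also satisfies $|\psi| \le n$. Hence each conjunct $\psi\vee\neg\psi = \psi \vee (\psi\to\bot)$ has size at most $2n+3$.

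Finally I add everything up. The conjunction of at most $n$ such conjuncts uses at most $n-1$ symbols $\wedge$, so its size is at most $n(2n+3)+n$. Adding $\nec$, the implication, and $\varphi'$ itself gives
\[
|\iota(\varphi)| \le n(2n+4) + n + 2 = O(n^2) = O(|\varphi|^2),
\]
which is polynomial, indeed quadratic.

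The only point needing care is the consistency of the size measure: whether $\neg$ counts as primitive, and whether $Subf$ is taken over $\varphi$ or its expansion $\varphi'$. I will fix $Subf$ to be computed on $\varphi'$ and the size to count symbols of $\mathcal L^*_\Box$. The linear expansion bound then carries the quadratic estimate back to $|\varphi|$. No step is a genuine obstacle beyond this bookkeeping.
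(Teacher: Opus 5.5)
Your proposal is correct and follows the paper's argument. Both bound the number of subformulas and the size of each disjunct $\psi\vee\neg\psi$ linearly in $|\varphi|$, which gives a quadratic bound; the paper states $|\varphi|\cdot(2|\varphi|+2)$ for the conjunction, and your explicit linear expansion of diamonds and of $\neg$ into $\cdot\to\bot$ is bookkeeping the paper leaves implicit.
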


\begin{proof}
We can easily deduce that $|\{\psi \vee \neg \psi \mid \psi \in Subf(\varphi)\}| \leq |\varphi| \cdot (2 \cdot |\varphi| + 2)$ so, by definition, we conclude that $|\iota(\varphi)|$ is quadratic on $|\varphi|$.
\end{proof}

To establish the exactness of the translation, we rely on a stronger observation: a formula $\varphi \in \mathcal{L}^{\sf K^*}$ is satisfied in a classical model if and only if it is satisfied in the corresponding $\sf CK$-model obtained by preserving the modal structure and taking the intuitionistic relation to be trivial. As a result, the constructive model behaves classically, so the equivalence follows by a straightforward structural induction on $\varphi$. Routine details are deferred to the Appendix.

\begin{proposition}
\label{exactnessKCK}
Let $\mathfrak{M} = \tuple{W, R, \|\cdot\|}$ be a $\sf K$-model. Define \\ $\mathfrak{M}^{\sf CK} := \tuple{W, \mathrm{Id}, R, \|\cdot \|^{\sf CK}}$ where $\mathrm{Id}$ is the identity relation and $\|p\|^{\sf CK} := \|p\|$ for $p \in \mathrm{Prop}$, and $\|\bot\|^{\sf CK} := \emptyset$.
Then, $\mathfrak{M}^{\sf CK}$ is a $\sf CK$-model and for every $w \in W$ and $\varphi \in \mathcal{L}^{\sf K^*}$,
$\mathfrak{M}, w \Vdash \varphi$ if and only if $\mathfrak{M}^{\sf CK}, w \Vdash \varphi$.
\end{proposition}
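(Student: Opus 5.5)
We first check that $\mathfrak{M}^{\sf CK}$ is a $\sf CK$-model, and then prove the truth equivalence by structural induction on $\varphi \in \mathcal{L}^{\sf K^*}$. We take $\varphi$ to be built from atoms, $\neg$, $\wedge$, $\vee$, $\Box$ and $\nec$, with diamonds already expanded as the paper stipulates. Constructively, $\neg\psi$ is read as $\psi\to\bot$.

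\textbf{Model conditions.} The identity is a preorder, and $R$ is an arbitrary relation, so the frame is a $\sf CK$-frame. Atomic ex falso holds because $\val\bot^{\sf CK}=\varnothing$. Atomic persistence is trivial, since $w\peq v$ forces $v=w$. Falsum persistence and falsum seriality hold vacuously because no world satisfies $\bot$. In particular $\mathfrak{M}^{\sf CK}$ is infallible.

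\textbf{Preliminary observation.} Since $\peq$ is $\mathrm{Id}$, we have $(\peq;R)=R$ and therefore $(\peq;R)^*=R^*$. This is the only structural fact the induction needs.

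\textbf{Induction on $\varphi$.}
\begin{enumerate}
\item Atoms hold by the definition of $\val\cdot^{\sf CK}$.
\item Conjunction and disjunction are immediate from the induction hypothesis.
\item For negation: $(\mathfrak{M}^{\sf CK},w)\Vdash \psi\to\bot$ iff for every $v\seq w$, i.e.\ only $v=w$, if $\psi$ holds at $w$ then $w\in\val\bot^{\sf CK}=\varnothing$. This is equivalent to $\psi$ failing at $w$ in $\mathfrak{M}^{\sf CK}$, hence by the induction hypothesis to $(\mathfrak{M},w)\Vdash\neg\psi$. If implication is a primitive of $\mathcal{L}^{\sf K^*}$, the same argument applies: the quantifier over $\peq$-successors collapses to $w$ itself, giving material implication.
\item For $\Box\psi$: $\Box\psi$ holds in $\mathfrak{M}^{\sf CK}$ at $w$ iff $\psi$ holds at every $v$ with $w(\mathrm{Id};R)v$, i.e.\ every $v$ with $wRv$. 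By the induction hypothesis this is exactly the classical clause for $[a]$.
\item For $\nec\psi$: the constructive clause ranges over $v$ with $w\,(\mathrm{Id};R)^*\,v$, i.e.\ $wR^*v$. This matches $\rho(a^*)=R^*$, so the induction hypothesis closes the case.
\end{enumerate}

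\textbf{Main obstacle.} There is no substantive difficulty. The one point needing care is the implication/negation clause, where we must use both that $\peq$ is the identity and that $\val\bot^{\sf CK}$ is empty, so that constructive negation coincides with classical complement.
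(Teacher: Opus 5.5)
Your proposal is correct and follows essentially the paper's proof: you check the (mostly vacuous) model conditions, then induct on $\varphi$, with negation as the only case needing care, using that $\peq$ is the identity and $\|\bot\|^{\sf CK}=\varnothing$. Your explicit observation that $(\mathrm{Id};R)=R$ and $(\mathrm{Id};R)^*=R^*$ simply spells out the $\Box$ and $\nec$ cases, which the paper leaves as immediate.
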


For correctness, we focus on preserving satisfiability only for the subformulas of the translated formula, and only within the constructive submodel generated by the worlds where excluded middle holds for all these subformulas. Since validity in the $\Diamond$-free fragment is independent of model fallibility, we assume without loss of generality that the source model is a $\sf WK$-model to avoid dealing with fallible worlds.

Let $\mathcal{M} =\tuple{W, \peq, R, \|\cdot\|}$ be a $\sf WK$-model, $\varphi \in \mathcal{L}^{\sf K^*}$ and let $U_\varphi := \|\nec \bigwedge \{\phi \vee \neg \phi \mid \phi \in Subf(\varphi)\}\|$. We write $\mathcal{M}_\varphi$ to denote the submodel of $\mathcal{M}$ generated by $U_\varphi$, i.e., the bi-relational model obtained by restricting $\mathcal{M}$ to the closure of $U_\varphi$ under $\peq$ and $R$. The classical model $\mathcal{M}^{\sf K}_\varphi$ is obtained by restricting $\langle W, (\peq ; R), \|\cdot\| \rangle$ to the set of worlds of ${\mathcal{M}_\varphi}$, omitting $\bot$ from the valuation domain.

\begin{proposition}
\label{correctenessKCK}
Let $\mathcal{M}$ be a $\sf WK$-model, $\varphi \in \mathcal{L}^{\sf K^*}$, and $w \in \mathcal{M}_\varphi$. Then, $\mathcal{M}^{\sf K}_\varphi, w \Vdash \psi$ iff $\mathcal{M}_\varphi, w \Vdash \psi$ for   $\psi \in Subf(\varphi)$.
\end{proposition}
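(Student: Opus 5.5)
The plan is to prove the claim by structural induction on $\psi \in Subf(\varphi)$, simultaneously for all $w \in \mathcal{M}_\varphi$. Subformulas of $\varphi$ are closed under taking subformulas, so the induction hypothesis is always available for the immediate subformulas. Throughout I read $\neg\chi$ in $\mathcal{L}^{\sf K^*}$ as $\chi \to \bot$ on the constructive side, as stipulated after the definition of $\iota$. I use that $\mathcal{M}$ is infallible, so $\bot$ holds nowhere.

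\textbf{Preliminary step.} First I would show that every world of $\mathcal{M}_\varphi$ satisfies $\chi \vee \neg\chi$ for every $\chi \in Subf(\varphi)$, and that truth is the same whether it is evaluated in $\mathcal{M}_\varphi$ or in $\mathcal{M}$.
\begin{itemize}
\item Take $v$ in the closure of $U_\varphi$ under $\peq$ and $R$. Then $u \, (\peq;R^*)^* \, v$ for some $u \in U_\varphi$.
\item By Lemma~\ref{AlternativeSemDiamondBox}, $v$ satisfies $\bigwedge\{\chi\vee\neg\chi \mid \chi \in Subf(\varphi)\}$. Indeed, $v$ even lies in $U_\varphi$, since $(\peq;R^*)^*$ composes with itself.
\item Every clause of the constructive semantics only quantifies over $\peq$- and $R$-successors. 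So a generated submodel preserves truth of all formulas, and excluded middle for subformulas of $\varphi$ holds at every world of $\mathcal{M}_\varphi$.
\end{itemize}

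\textbf{Inductive cases.}
\begin{itemize}
\item Atoms: immediate, since the valuations coincide.
\item $\wedge$ and $\vee$: immediate from the induction hypothesis.
\item $\neg\chi$, which I expect to be the main obstacle. By excluded middle at $w$, either $\mathcal{M}_\varphi, w \Vdash \chi$ or $\mathcal{M}_\varphi, w \Vdash \neg\chi$. In the first case reflexivity of $\peq$ and infallibility give $w \not\Vdash \neg\chi$. Hence $\mathcal{M}_\varphi, w \Vdash \neg\chi$ iff $\mathcal{M}_\varphi, w \not\Vdash \chi$, and the induction hypothesis converts this into the classical clause. If implication also appears as a primitive in $\mathcal{L}^{\sf K^*}$, it is handled the same way: persistence together with excluded middle for both components at $w$ reduces the universally quantified clause to its value at $w$ alone.
\item $\Box\chi$: the constructive clause quantifies over $v$ with $w(\peq;R)v$, which is exactly the accessibility relation of $\mathcal{M}^{\sf K}_\varphi$. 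All such $v$ remain in $\mathcal{M}_\varphi$ by closure, so the induction hypothesis applies at each $v$.
\item $\nec\chi$: the constructive clause uses $(\peq;R)^*$. This is the reflexive–transitive closure of the classical relation $(\peq;R)$ restricted to $\mathcal{M}_\varphi$, because paths never leave the generated submodel. The induction hypothesis then finishes the case.
\end{itemize}
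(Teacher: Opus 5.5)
Your proof is correct and follows the paper's argument. Both proceed by induction on $\psi \in Subf(\varphi)$, and negation is the only substantive case; you settle it exactly as the paper does, using infallibility in one direction and excluded middle at $w$ in the other. Your preliminary step adds something useful: it shows that $U_\varphi$ is already closed under $\peq$ and $R$, and that truth in $\mathcal{M}_\varphi$ agrees with truth in $\mathcal{M}$, which the paper uses only tacitly when it says the generated model forces excluded middle.
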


\begin{proof}
The model $\mathcal{M}^{\sf K}$ is clearly a ${\sf K} ^*$-model. For the equivalence of validity, we argue by induction on $\psi \in \mathcal{L}^{\sf K^*}$, discussing only the case of negation. The remaining cases follow directly from the inductive hypothesis and the definition (the box case additionally using truth persistence).

Suppose that $\mathcal{M}_\varphi, w \Vdash \psi \to \bot$. Since the model is infallible, this entails $\mathcal{M}_\varphi, w \not\Vdash \psi$, and hence, by the inductive hypothesis, $\mathcal{M}^{\sf K}_\varphi, w \not\Vdash \psi$.

Conversely, assume $\mathcal{M}^{\sf K}_\varphi, w \not\Vdash \psi$. By the inductive hypothesis, $\mathcal{M}_\varphi, w \not\Vdash \psi$. As the generated model $\mathcal{M}_\varphi$ forces the excluded middle for subformulas of $\varphi$, we have $\mathcal{M}_\varphi, w \Vdash \psi \vee \neg \psi$, so we conclude $\mathcal{M}_\varphi, w \Vdash \neg \psi$.
\qedhere
\end{proof}

By the preceding results, we conclude that the translation $\iota$ preserves validity as intended.

\begin{theorem}
\label{KiffCK}
$\Vdash_{\sf K^*} \varphi$ iff $\Vdash_{\sf CK^*_\Box} \iota (\varphi)$ for any $\varphi\in\mathcal L^{\sf K^*}$.
\end{theorem}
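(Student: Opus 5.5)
We prove the two implications separately, relying on Proposition~\ref{exactnessKCK} for one direction and Proposition~\ref{correctenessKCK} for the other. By Proposition~\ref{EqCKWKBox}, since $\iota(\varphi)\in\mathcal L^*_\Box$, it suffices to work with $\sf WK$-models throughout.

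\emph{Left to right (correctness).} Assume $\Vdash_{\sf K^*}\varphi$. Let $\mathcal M$ be a $\sf WK$-model and $w$ a world; we must show $\mathcal M,w\Vdash\iota(\varphi)$. Take $v\seq w$ with $\mathcal M,v\Vdash \nec\bigwedge\{\psi\vee\neg\psi\mid\psi\in Subf(\varphi)\}$. Then $v\in U_\varphi$, so $v$ lies in the generated submodel $\mathcal M_\varphi$. Since $\mathcal M^{\sf K}_\varphi$ is a $\sf K^*$-model, validity gives $\mathcal M^{\sf K}_\varphi,v\Vdash\varphi$. Proposition~\ref{correctenessKCK} then yields $\mathcal M_\varphi,v\Vdash\varphi$. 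Finally, truth in a generated submodel transfers back to $\mathcal M$, because every clause of the semantics only inspects $\peq$- and $R$-successors, which the generated submodel keeps. Hence $\mathcal M,v\Vdash\varphi$, as required.

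One point needs care, and we expect it to be the main obstacle. Proposition~\ref{correctenessKCK} uses excluded middle for subformulas at \emph{every} world of $\mathcal M_\varphi$, including worlds added only by closing $U_\varphi$ under $\peq$ and $R$. We therefore check that $U_\varphi$ is already closed under $\peq$ and $R$. Closure under $\peq$ holds by persistence. Closure under $R$ holds because $\nec$-truth at $u$ gives $\nec$-truth at any $u'$ with $u R u'$, using Lemma~\ref{AlternativeSemDiamondBox} together with the reflexivity of $\peq$. So the generated submodel is just $U_\varphi$, and excluded middle holds at every world of it. We must also check the box clauses of Proposition~\ref{correctenessKCK}: the classical relation in $\mathcal M^{\sf K}_\varphi$ is $(\peq;R)$, which matches the constructive $\Box$ clause, and its reflexive–transitive closure $(\peq;R)^*$ matches $\nec$.

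\emph{Right to left (exactness).} We argue contrapositively. Suppose $\mathfrak M,w\not\Vdash\varphi$ for some $\sf K^*$-model $\mathfrak M$. Form $\mathfrak M^{\sf CK}$ as in Proposition~\ref{exactnessKCK}; it is an infallible model whose intuitionistic relation is $\mathrm{Id}$. Applying Proposition~\ref{exactnessKCK} to each subformula $\psi$ and to $\neg\psi$ (classical and constructive truth coincide), every world satisfies $\psi\vee\neg\psi$. Hence the antecedent $\nec\bigwedge\{\psi\vee\neg\psi\}$ holds at $w$. Meanwhile $\varphi$ fails at $w$, again by Proposition~\ref{exactnessKCK}. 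Since $\peq$ is the identity, the implication $\iota(\varphi)$ fails at $w$, so $\iota(\varphi)$ is not $\sf CK^*_\Box$-valid.
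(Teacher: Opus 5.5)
Your proof is correct and follows the paper's route. Exactness goes through the model $\mathfrak M^{\sf CK}$ of Proposition~\ref{exactnessKCK}, and correctness reduces to $\sf WK$-models via Proposition~\ref{EqCKWKBox} and then passes through $\mathcal M^{\sf K}_\varphi$ using Proposition~\ref{correctenessKCK}. Your extra checks fill in steps the paper leaves implicit: that $U_\varphi$ is already closed under $\peq$ and $R$, so excluded middle holds at every world of $\mathcal M_\varphi$; that truth transfers between $\mathcal M$ and the generated submodel; and that Proposition~\ref{exactnessKCK} is applied only to $\mathcal L^{\sf K^*}$-formulas rather than to $\iota(\varphi)$ itself.
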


\begin{proof}
For exactness, suppose $\not\Vdash_{\sf K^*} \varphi$. Then, there exist a ${\sf K}^*$-model $\mathfrak{M}$ and $w \in W^{\mathfrak{M}}$ such that $\mathfrak{M}, w \not\Vdash \varphi$. Since $\mathfrak{M}$ is classical, $\mathfrak{M}, w \Vdash \nec \bigwedge \{ \phi \vee \neg \phi \mid \phi \in Subf(\varphi) \}$ is trivially satisfied. Therefore, we have that $\mathfrak{M}, w \not\Vdash \iota(\varphi)$ by definition, so we can conclude $\mathfrak{M}^{\sf CK}, w \not\Vdash \iota(\varphi)$ by Proposition~\ref{exactnessKCK}.

Conversely, assume $\mathcal{M}, w \not\Vdash \iota(\varphi)$ for a ${\sf CK}$-model $\mathcal{M}$ and a world $w \in W^{\mathcal{M}}$. By Lemma \ref{EqCKWKBox}, there is a $\sf WK$-model $\mathcal{M}'$ such that $\mathcal{M}', w' \not\Vdash \iota(\varphi)$ at some world $w'$, i.e. there exists $v \seq w'$ such that
$\mathcal{M}, v \Vdash \nec \bigwedge \{ \phi \vee \neg \phi \mid \phi \in Subf(\varphi) \}$ and $\mathcal{M}, v \not\Vdash \varphi$. Thus, we deduce $\mathcal{M}^{\sf K}_\varphi, v \not\Vdash \varphi$ by Proposition~\ref{correctenessKCK}, and conclude $\not\Vdash_{\sf K^*} \varphi$.
\qedhere
\end{proof}

We thus obtain lower bounds for the master-modality logics.

\begin{theorem}
\label{theoEXPTIMEhard}
Validity of ${\sf WK}^*$, $\sf CK^*$ and ${\sf CK}^*_\Box$ is {\sc ExpTime}-hard.
\end{theorem}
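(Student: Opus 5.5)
The plan is a chain of polynomial-time many-one reductions from validity in ${\sf K}^*$, which is {\sc ExpTime}-hard by Theorem~\ref{ValidityPDL}(2). By Theorem~\ref{KiffCK}, for every $\varphi\in\mathcal L^{\sf K^*}$ we have $\Vdash_{\sf K^*}\varphi$ iff $\Vdash_{{\sf CK}^*_\Box}\iota(\varphi)$. By Proposition~\ref{PolynomialTranslationiota}, $\iota(\varphi)$ has size quadratic in $|\varphi|$. The map $\varphi\mapsto\iota(\varphi)$ is also clearly computable in polynomial time: enumerate the polynomially many subformulas, form the conjunction of excluded-middle instances, and prefix $\nec$. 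One caveat is the convention that diamonds of $\mathcal L^{\sf K^*}$ are first expanded as $\neg\Box\neg$. This expansion is linear, and it does not change the meaning of the formula in the classical ${\sf K}^*$ setting. Hence $\iota$ is a polynomial reduction, and validity in ${\sf CK}^*_\Box$ is {\sc ExpTime}-hard.

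Next I transfer hardness upward to ${\sf CK}^*$ and ${\sf WK}^*$ using the identity map on $\mathcal L^*_\Box$. Since $\mathcal L^*_\Box\subseteq\mathcal L^*$ and ${\sf CK}^*_\Box$ is defined as the set of $\mathcal L^*_\Box$-formulas valid on all $\sf CK$-models, any $\psi\in\mathcal L^*_\Box$ satisfies $\Vdash_{{\sf CK}^*_\Box}\psi$ iff $\Vdash_{{\sf CK}^*}\psi$. So the identity is a reduction from ${\sf CK}^*_\Box$ to ${\sf CK}^*$. For ${\sf WK}^*$ I invoke Proposition~\ref{EqCKWKBox}: for $\psi\in\mathcal L^*_\Box$, validity on all $\sf CK$-models coincides with validity on all $\sf WK$-models. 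The identity therefore also reduces ${\sf CK}^*_\Box$ to ${\sf WK}^*$. Composing these with $\iota$ gives polynomial reductions from ${\sf K}^*$ to each of the three logics. Alternatively, for ${\sf CK}^*$ one could compose with $\omega$ via Theorem~\ref{CorrespondenceConstructiveAndWijesekera}, but the identity route is simpler.

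I do not expect a serious obstacle, because the substantive work sits in Theorem~\ref{KiffCK}. The only points to state explicitly are the following.
\begin{itemize}
\item $\iota$ really is polynomial-time computable, not merely polynomial in output size.
\item $\iota(\varphi)$ lands in $\mathcal L^*_\Box$. It contains no $\Diamond$ or $\ps$, since negation is $\cdot\to\bot$ and diamonds are expanded beforehand.
\end{itemize}
Together with Theorem~\ref{theoEXPTIME}, the result then yields {\sc ExpTime}-completeness of all three logics.
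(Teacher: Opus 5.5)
Your proposal is correct. For ${\sf CK}^*_\Box$ and ${\sf CK}^*$ it matches the paper's argument: reduce from ${\sf K}^*$ via $\iota$ and Theorem~\ref{KiffCK}, then embed the fragment into the full logic by the identity.

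For ${\sf WK}^*$ you take a different route. The paper transfers hardness from the full ${\sf CK}^*$ through the translation $\omega$, using Theorem~\ref{CorrespondenceConstructiveAndWijesekera} with Proposition~\ref{PolynomialOmega} for the size bound. You instead note that, by Proposition~\ref{EqCKWKBox}, the identity on $\mathcal L^*_\Box$ already reduces ${\sf CK}^*_\Box$ to ${\sf WK}^*$.

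Your route is shorter and avoids $\omega$ altogether. It also gives a slightly sharper conclusion: the $\Diamond$-free fragment of ${\sf WK}^*$ is itself {\sc ExpTime}-hard. The paper's reduction does not show this, because $\omega$ introduces the conjunct $\Diamond p_\bot$ and so leaves $\mathcal L^*_\Box$ even on $\Diamond$-free input. The paper's route has its own advantage: it is a general reduction that would carry any lower bound for ${\sf CK}^*$ over to ${\sf WK}^*$, including bounds that depend on $\Diamond$, and it reuses machinery already built for the upper bounds.

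You also state explicitly that $\iota$ is computable in polynomial time, not merely of polynomial size, and that its output lies in $\mathcal L^*_\Box$. The paper leaves both points implicit.
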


\begin{proof}
For ${\sf CK}^*_\Box$, the claim follows directly from Theorem~\ref{KiffCK}. By a simple embedding of the fragment into the full logic, the same lower bound extends to $\sf CK^*$. For ${\sf WK}^*$, the {\sc ExpTime} bound carries over via the translation $\omega$ using Theorem~\ref{CorrespondenceConstructiveAndWijesekera}.
\end{proof}

Combined with the upper bounds of Theorem~\ref{theoEXPTIME}, this yields {\sc ExpTime}-completeness of the master-modality logics.

\begin{corollary}
${\sf WK}^*$, $\sf CK^*$ and ${\sf CK}^*_\Box$ are {\sc ExpTime}-complete.
\end{corollary}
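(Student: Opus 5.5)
The corollary follows by combining the two complexity results already established, so the plan is simply to assemble them and check that the reductions used are of the right kind.

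\emph{Upper bound.} Theorem~\ref{theoEXPTIME} places validity of ${\sf WK}^*$ in {\sc ExpTime} via the linear translation $\tau$ into $\sf PDL$ (Theorem~\ref{WKiffPDL}, Proposition~\ref{PolynomialTranslation}) and Theorem~\ref{ValidityPDL}. For ${\sf CK}^*$ we first apply the polynomial translation $\omega$ (Proposition~\ref{PolynomialOmega}, Theorem~\ref{CorrespondenceConstructiveAndWijesekera}). Composing polynomial-time translations with an {\sc ExpTime} procedure stays in {\sc ExpTime}: a formula of size $n$ is mapped to one of size $p(n)$, which is decided in time $2^{q(p(n))}$, still singly exponential in $n$. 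Since $\mathcal L^*_\Box \subseteq \mathcal L^*$ and ${\sf CK}^*_\Box$ is exactly the restriction of ${\sf CK}^*$ to that fragment, the bound for ${\sf CK}^*_\Box$ is inherited directly.

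\emph{Lower bound.} Theorem~\ref{theoEXPTIMEhard} gives {\sc ExpTime}-hardness. For ${\sf CK}^*_\Box$ this comes from the polynomial reduction $\iota$ from ${\sf K}^*$ (Proposition~\ref{PolynomialTranslationiota}, Theorem~\ref{KiffCK}) together with the hardness of ${\sf K}^*$ in Theorem~\ref{ValidityPDL}. The inclusion of the fragment transfers hardness to ${\sf CK}^*$. For ${\sf WK}^*$, the reduction $\varphi\mapsto\omega(\varphi)$ is a polynomial many-one reduction from ${\sf CK}^*$ by Theorem~\ref{CorrespondenceConstructiveAndWijesekera}. Alternatively, Proposition~\ref{EqCKWKBox} shows that ${\sf CK}^*_\Box$-validity coincides with validity over ${\sf WK}$-models, so the identity map on $\mathcal L^*_\Box$ already reduces ${\sf CK}^*_\Box$ to ${\sf WK}^*$.

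\emph{Where care is needed.} The only delicate point is verifying that every reduction used is polynomial-time computable and not merely polynomial in output size. For $\iota$ and $\omega$ this is clear, because each is produced by listing subformulas or variables and substituting them. The translation $\iota$ also requires formulas of $\mathcal L^{\sf K^*}$ to have their diamonds expanded; since $\langle\alpha\rangle$ abbreviates $\neg[\alpha]\neg$, this expansion adds at most a constant factor in size. Putting both directions together yields {\sc ExpTime}-completeness for all three logics.
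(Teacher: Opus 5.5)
Your proposal is correct and follows the paper's argument: the corollary combines the upper bound of Theorem~\ref{theoEXPTIME} (via $\tau$ and $\omega$) with the lower bound of Theorem~\ref{theoEXPTIMEhard} (via $\iota$, the fragment inclusion, and $\omega$). Your alternative for the ${\sf WK}^*$ lower bound is also valid: by Proposition~\ref{EqCKWKBox}, the identity on $\mathcal L^*_\Box$ reduces ${\sf CK}^*_\Box$ to ${\sf WK}^*$, which is slightly simpler than going through $\omega$.
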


\section{An application: $\sf CS4$ and $\sf WS4$}\label{CWS4}

Our logics extend $\sf CK$ and $\sf WK$, and thus our results also yield {\sc ExpTime} upper bounds on their decision problems.
However, this observation is not too interesting since a simpler translation into ${\sf S4}\oplus {\sf K}$ is available, yielding a sharper {\sc PSpace} bound~\cite{Dalmonte25}.
In contrast, if we focus only on the master modalities, we may also view our logics as extensions of constructive $\sf S4$, which do not seem to embed into ${\sf S4}\oplus {\sf S4}$.
We thus turn our attention to transitive, reflexive logics over the \emph{intuitionistic modal language} $\mathcal L$, which is the $*$-free fragment of $\mathcal L^*$, i.e.~the only modalities are $\Diamond$ and $\Box$.

Let us review the birelational semantics of constructive $\sf S4$~\cite{AlechinaMPR01}.

\begin{definition}\label{def::csf-model}
    A $\sf CS4$-frame, or bi-preorder, is a $\sf CK$-frame $\mathcal F=\tuple{\domai \Frame ,\peq^\Frame, \R^\Frame}$, where $\R^{\Frame}$ is a preorder and $\Frame$ is {\em confluent,} i.e.~$w\R v \peq v'$ implies that there is $w'$ such that $w\peq w' \R v'$.

A {\em ${\sf CS4}$-model} is a tuple $\Model=\tuple{\domai{\Model}, \peq^\Model, R^\Model, \val\cdot^{\Model}}$ consisting of a $\sf CS4$-frame equipped with a $\sf CK$-valuation $\val\cdot^{\Model}:\mathrm{Prop} \cup \{\bot\}\to\mathcal P(\domai{\Model})$. A \emph{${\sf WS4}$-model} is an infallible ${\sf CS4}$-model.
\end{definition}

The confluence property can be found in the literature under the name \textit{backwards confluence} or \textit{back-up confluence} (see e.g.~\cite{BalbianiToCL,balbiani2024variants}). The satisfaction relation $\Vdash$ and validity notions apply naturally to $\mathcal{L}$. We write $\Vdash_{\sf CS4} \varphi$ (respectively, $\Vdash_{\sf WS4} \varphi$) to mean that $\varphi \in \mathcal{L}$ is valid on all $\sf CS4$-frames ($\sf WK4$-frames, respectively).

\subsection{Embedding transitive extensions into the Master-\\Modality logics}\label{subsec:embedding-wk} 

One way to view the $\sf CS4$ semantics for $\Box$ is as a standard modality, albeit based on a new relation, $(\peq;R)$.
The reason that confluence
enforces the $\ax{4}_\Box$ axiom is that it makes $(\peq;\R)$ be a preorder, provided $\R$ is.
This ensures the validity of $\Box p\to \Box\Box p$ over $\sf CS4$ models, which does not follow from the transitivity of $\R$ alone.
This is well known~\cite{AlechinaMPR01}, but we include a proof to keep the article self-contained.

\begin{center}
\begin{minipage}{0.60\textwidth}
\raggedright
\begin{lemma}
\label{leqRCS4}
If $\Model$ is a $\sf CS4$-frame, then $(\peq;R)$ is a preorder.
\end{lemma}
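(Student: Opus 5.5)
We verify the two preorder conditions for the composite relation $(\peq;R)$ directly from the frame conditions of Definition~\ref{def::csf-model}. The only non-trivial ingredient is confluence, which lets us commute an $R$-step past a $\peq$-step.

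\emph{Reflexivity.} For any world $w$ we have $w\peq w$, because $\peq$ is a preorder, and $w\R w$, because in a $\sf CS4$-frame $\R$ is a preorder as well. Composing gives $w(\peq;R)w$.

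\emph{Transitivity.} Suppose $w(\peq;R)v$ and $v(\peq;R)u$. Unfolding the composition, there are worlds $w_1$ and $v_1$ such that
\[
w\peq w_1 \R v \peq v_1 \R u .
\]
The obstacle is the middle segment $w_1 \R v \peq v_1$, where an $R$-step is followed by a $\peq$-step. Confluence applies exactly here: from $w_1 \R v\peq v_1$ it provides a world $w'$ with $w_1\peq w'\R v_1$. The chain then becomes
\[
w\peq w_1\peq w' \R v_1\R u .
\]
Transitivity of $\peq$ gives $w\peq w'$, and transitivity of $\R$ gives $w'\R u$. Hence $w(\peq;R)u$.

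Reflexivity and transitivity together show that $(\peq;R)$ is a preorder. Confluence is used only once, to swap the $R$-step and the $\peq$-step in the middle of the chain; everything else follows from $\peq$ and $\R$ each being preorders.
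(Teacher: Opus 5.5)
Your proof is correct and matches the paper's argument: reflexivity comes from reflexivity of both $\peq$ and $R$, and transitivity comes from applying confluence once to the middle segment $w_1 \R v \peq v_1$ and then using transitivity of each relation.
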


\begin{proof}
We have that $w \peq w \R w$ by reflexivity of both relations. For transitivity, assume $w (\peq;R) v$ and $v (\peq;R) u$ for some $w, v, u \in W^\mathcal{M}$. By definition, we can find $w'$ and $v'$ such that $w \peq w' R v$ and $v \peq v' R u$. In particular, from $w' R v \peq v'$, by confluence there exists $w''$ satisfying $w' \peq w'' R v'$. Thus, by transitivity of the relations we conclude $w \peq w'' R u$. \qedhere
\end{proof}
\end{minipage}
\hfill
\begin{minipage}{0.35\textwidth}
    \centering
\begin{tikzpicture}
    \begin{pgfonlayer}{nodelayer}
        \node [style=root] (0) at (0, 0) {$w$};
        \node [style=root] (1) at (0, 1.2) {$w'$};
        \node [style=root] (2) at (1.2, 1.2) {$v$};
        \node [style=root] (3) at (1.2, 2.4) {$v'$};
        \node [style=root] (4) at (2.4, 2.4) {$u$};
        \node [style=root] (5) at (0, 2.4) {$w''$};
    \end{pgfonlayer}
    \begin{pgfonlayer}{edgelayer}
        \draw [style=edgeto, red] (0) to (1);
        \draw [style=edgeto, blue] (1) to (2);
        \draw [style=edgeto, red] (2) to (3);
        \draw [style=edgeto, blue] (3) to (4);

        \draw [style=edgeto] (0) to node[midway, sloped, below] {$\peq ; R$}  (2);

        \draw [style=edgeto] (2) to node[midway, sloped, below] {$\peq ; R$}  (4);
        
        \draw [style=edgeto, dashed, red] (1) to (5);
        \draw [style=edgeto, dashed, blue] (5) to (3);

        \draw [style=edgeto, red, bend left=30] (0) to (5);

        \draw [style=edgeto, blue, bend left=30] (5) to (4);
    \end{pgfonlayer}
\end{tikzpicture}

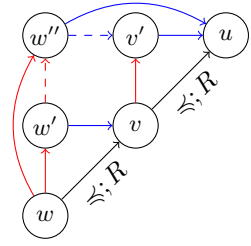
\captionof{figure}{Confluence in Lemma \ref{leqRCS4}.}
\label{fig:LemmaCK4peqreltrans}
\end{minipage}
\end{center}

This suggests a translation $\kappa$ of $\sf CS4$ into ${\sf CK}^*$, where $\kappa(\varphi)$ is the result of replacing every occurrence of $\Box$ by $\nec$ and every occurrence of $\Diamond$ by $\ps$. Indeed, since every $\sf CS4$-model is in particular a $\sf CK$-model, the exactness of the translation is reduced to the following statement.

\begin{proposition}
\label{ExactnessCS4toCK}
If $\mathcal{M}$ is a $\sf CS4$-model, $w\in \domai{\mathcal{M}}$ and $\varphi \in \mathcal{L}$, then $(\mathcal{M},w) \Vdash  \varphi$ iff $\big(\mathcal{M} , w\big) \Vdash  \kappa(\varphi)$.
\end{proposition}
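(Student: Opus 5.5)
We argue by structural induction on $\varphi \in \mathcal{L}$, proving simultaneously for all worlds $w$ of the fixed $\sf CS4$-model $\mathcal{M}$ that $(\mathcal{M},w)\Vdash\varphi$ iff $(\mathcal{M},w)\Vdash\kappa(\varphi)$. The translation $\kappa$ commutes with $\bot$, atoms, $\wedge$, $\vee$ and $\to$. Also, the clauses for these connectives are evaluated identically on both sides (the implication clause only quantifies over $\peq$-successors, to which the induction hypothesis applies). So these cases are immediate. The only substantive cases are $\Box$ and $\Diamond$, and in both we reduce to a statement about the relations rather than about truth.

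For $\Box$, we must show that $\Box\psi$ and $\nec\kappa(\psi)$ agree at $w$. By the induction hypothesis it suffices to show that, for every $v$, we have $w(\peq;R)v$ for some witness iff $w(\peq;R)^*v$, up to truth persistence. One direction is immediate: $(\peq;R)\subseteq(\peq;R)^*$, so $\nec$ implies $\Box$. For the converse, Lemma~\ref{leqRCS4} tells us that $(\peq;R)$ is a preorder on a $\sf CS4$-frame. Hence it is reflexive and transitive, so $(\peq;R)^* = (\peq;R)$. The two semantic clauses therefore quantify over exactly the same set of worlds, and the case follows by the induction hypothesis.

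For $\Diamond$, we must compare ``for all $v\seq w$ there is $u$ with $vRu$ and $u\Vdash\psi$'' with the same clause in which $R$ is replaced by $R^*$. In a $\sf CS4$-frame $R$ is a preorder, so $R^*=R$. The clauses again coincide, and the induction hypothesis applied at $u$ closes the case.

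I do not expect a real obstacle. The only point needing care is the $\Box$ case, where the naive reading of $\nec$ quantifies over $(\peq;R)^*$ rather than $(\peq;R)$. This is exactly what Lemma~\ref{leqRCS4}, and hence confluence, resolves. Lemma~\ref{AlternativeSemDiamondBox} could alternatively be used to rewrite $\nec$ via $(\peq;R^*)^*$, but with $R^*=R$ and the preorder property of $(\peq;R)$ this collapses to the same set, so the direct argument suffices.
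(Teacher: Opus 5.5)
Your proof is correct and takes the same approach as the paper: a structural induction where the only nontrivial cases are the modal ones. The $\Box$ case rests on Lemma~\ref{leqRCS4}, giving $(\peq;R)^*=(\peq;R)$, and the $\Diamond$ case on $R^*=R$ for the preorder $R$; the paper leaves exactly these details to the reader, and you have filled them in accurately.
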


\begin{proof}
The proof proceeds by a straightforward induction on $\varphi \in \mathcal{L}$. For   $\Box$-formulas, we use that $(\peq ; R)$ is a preorder (Lemma~\ref{leqRCS4}). The details are left to the reader.
\end{proof}

To see that the translation is correct, we describe how a $\sf CK$ model can be transformed into a $\sf CS4$-model. For $\mathcal{F} = \tuple{W,\peq,\R}$ a $\sf CK$-frame, we define a $\sf CS4$-frame $\mathcal{F}^{\sf CS4} := \tuple{W^{\sf CS4},\peq^{\sf CS4},R^{\sf CS4}}$, where
\begin{enumerate}[label=\textbullet, wide, labelwidth=!, labelindent=0pt, itemsep=0pt]

\item $W^{{\sf CS4}} = W\times\{0,1\}$;

\item $(w,i)\peq^{\sf CS4} (v,j) $ if $ w \peq v$;

\item $(w,i) R^{\sf CS4} (v,j)$ if either $i=j=0$ and $w R^* v$, or $i = 1$ and $w (\leq;R^*)^* v$.

\end{enumerate}

The ${\sf CS4}$-frame is defined so that each $w \in W$ is duplicated: $(w,0)$ preserves the original accessibility relations, while $(w,1)$ expands its accessibility to include all pairs reachable via $(\peq; R^*)^*$. This construction relies on the alternative semantics for $\nec$ given in Lemma~\ref{AlternativeSemDiamondBox}, ensuring that (i) validity is preserved for the translation $\kappa$ in the model built over this frame, and (ii) the frame itself is confluent.

\begin{remark}
\label{Remark1}
If $w R^* v$ then $(w,i) R^{\sf CS4} (v,0)$ for all $i \in \{0,1\}$; and $(w,1) R^{\sf CS4} (v, j)$ for all $j \in \{0,1\}$. If $(w,i) R^{\sf CS4} (v,j)$, then $w (\leq ; R^*)^* v$.
\end{remark}

Our construction yields a ${\sf CS4}$-frame, for which confluence is guaranteed by the $1$-indexed worlds.

\begin{lemma}\label{lemmaCS4frame}
If $\mathcal{F} = \tuple{W, \peq, R}$ is a $\sf CK$-frame, then $\mathcal{F}^{\sf CS4}$ is a $\sf CS4$-frame.
\end{lemma}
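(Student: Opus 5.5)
We have to check three things: $\peq^{\sf CS4}$ is a preorder, $R^{\sf CS4}$ is a preorder, and the frame is confluent. The first is immediate, since $(w,i)\peq^{\sf CS4}(v,j)$ depends only on $w\peq v$, and $\peq$ is reflexive and transitive.

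For $R^{\sf CS4}$, reflexivity follows from Remark~\ref{Remark1}: $w R^* w$, so $(w,i) R^{\sf CS4} (w,0)$. To get $(w,1)R^{\sf CS4}(w,1)$ we use $w(\peq;R^*)^*w$ via the empty path. For transitivity, suppose $(w,i)R^{\sf CS4}(v,j)R^{\sf CS4}(u,k)$ and split on $i$.
\begin{itemize}
\item If $i=0$, then $j=0$ and $wR^*v$. If $k=0$, then $vR^*u$, so $wR^*u$. The case $j=0$, $k=1$ cannot occur, because a link from a $0$-world only reaches $0$-worlds.
\item If $i=1$, Remark~\ref{Remark1} gives $w(\peq;R^*)^*v$ and $v(\peq;R^*)^*u$ whatever $j,k$ are. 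Concatenating yields $w(\peq;R^*)^*u$, hence $(w,1)R^{\sf CS4}(u,k)$.
\end{itemize}

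For confluence, assume $(w,i)R^{\sf CS4}(v,j)\peq^{\sf CS4}(v',j')$, so $v\peq v'$. We need some $(w',i')$ with $(w,i)\peq^{\sf CS4}(w',i')R^{\sf CS4}(v',j')$. We take $w'=w$ and $i'=1$; then $(w,i)\peq^{\sf CS4}(w,1)$ holds by reflexivity of $\peq$. By Remark~\ref{Remark1}, $w(\peq;R^*)^*v$. Appending the step $v\peq v'$, followed by an empty $R^*$ step, gives $w(\peq;R^*)^*v'$. So $(w,1)R^{\sf CS4}(v',j')$ for either value of $j'$.

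The only delicate point is this confluence step. It works precisely because the $1$-copy's relation $(\peq;R^*)^*$ absorbs trailing $\peq$-steps, and because $\peq^{\sf CS4}$ ignores the index and so lets us move from any copy to the $1$-copy of the same world. Everything else is bookkeeping with Remark~\ref{Remark1}.
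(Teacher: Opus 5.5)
Your proposal is correct and takes essentially the same approach as the paper: both check the preorder conditions by case distinction on the world index using Remark~\ref{Remark1}, and both prove confluence with the witness $(w,1)$, using that $(\peq;R^*)^*$ absorbs the trailing $\peq$-step. You also write out the transitivity case analysis that the paper omits as routine.
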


\begin{proof}
The relation $\peq^{\sf CS4}$ is a preorder, inheriting reflexivity and transitivity from $\peq$. Similarly, $R^{\sf CS4}$ is reflexive (by a case distinction on the world index using Remark~\ref{Remark1}) and transitive (routine case analysis omitted). For confluence, suppose $(w,i) R^{\sf CS4} (v,j) \peq^{\sf CS4} (u,k)$.  
By definition and Remark~\ref{Remark1}, we then have $(w,i) \peq^{\sf CS4} (w,1) R^{\sf CS4} (u,k)$, establishing the condition.
\end{proof}

We now enrich the previous construction with a suitable valuation to conclude with the correctness of the translation. For a $\sf CK$-model $\mathcal{M}$ obtained by equipping $\mathcal{F}$ with a valuation $\|\cdot\|$, we define $\mathcal{M}^{\sf CS4} := \tuple{W^{\sf CS4}, \leq^{\sf CS4}, R^{\sf CS4}, \|\cdot\|^{\sf CS4}}$ for $\val\cdot^{\sf CS4} :=(\pi^{-1} \circ \val\cdot)$ for $\pi\colon W^{\sf CS4} \to W$ the projection onto the first component.

\begin{proposition}
\label{CorrectnessCS4toCK}
If $\mathcal{M}$ is a $\sf CK$-model, $w\in \domai{\mathcal{M}}$ and $\varphi \in \mathcal{L}$, then $\mathcal{M}^{\sf CS4}$ is a $\sf CS4$-model and 
$(\mathcal{M},w) \Vdash  \kappa(\varphi) $ iff $\big(\mathcal{M}^{\sf CS4} ,(w,i)\big) \Vdash  \varphi$ for every $i \in \{0,1\}$.
\end{proposition}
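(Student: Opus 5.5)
We first check that $\mathcal{M}^{\sf CS4}$ is a $\sf CS4$-model, and then prove the equivalence by induction on $\varphi\in\mathcal L$, showing the claim for both indices $i\in\{0,1\}$ at once. By Lemma~\ref{lemmaCS4frame} the underlying frame is already a $\sf CS4$-frame, so only the $\sf CK$-valuation conditions of Definition~\ref{def::csf-model} remain.
\begin{enumerate}
\item Atomic ex falso is inherited through $\pi^{-1}$.
\item Atomic persistence along $\peq^{\sf CS4}$ holds because $(w,i)\peq^{\sf CS4}(v,j)$ means exactly $w\peq v$.
\item For falsum persistence along $R^{\sf CS4}$, Remark~\ref{Remark1} gives $w(\peq;R^*)^*v$ whenever $(w,i)R^{\sf CS4}(v,j)$. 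Since $\val\bot$ is closed under both $\peq$ and $R$ in $\mathcal{M}$, it is closed along such paths.
\item Falsum seriality is trivial because $R^{\sf CS4}$ is reflexive.
\end{enumerate}

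For the induction, the cases $\bot$ and $p$ are immediate from $\val\cdot^{\sf CS4}=\pi^{-1}\circ\val\cdot$, and the cases $\wedge,\vee$ follow directly from the induction hypothesis. For $\varphi\to\psi$, note that the $\peq^{\sf CS4}$-successors of $(w,i)$ are exactly the pairs $(v,j)$ with $w\peq v$ and $j$ arbitrary. By the induction hypothesis, applied uniformly in the index, forcing at $(v,j)$ is equivalent to forcing $\kappa(\cdot)$ at $v$, which gives the case.

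For $\Box\varphi$, we show that the set of $v$ with $(w,i)(\peq^{\sf CS4};R^{\sf CS4})(v,j)$ for some $j$ is exactly $\{v : w(\peq;R^*)^*v\}$.
\begin{itemize}
\item Inclusion: if $(w,i)\peq^{\sf CS4}(w',i')R^{\sf CS4}(v,j)$, then $w\peq w'$, and $w'(\peq;R^*)^*v$ by Remark~\ref{Remark1}. Hence $w(\peq;R^*)^*v$.
\item Reverse inclusion: if $w(\peq;R^*)^*v$, then $(w,i)\peq^{\sf CS4}(w,1)R^{\sf CS4}(v,j)$ for every $j$, by the definition of the $1$-indexed relation.
\end{itemize}
Together with the induction hypothesis, this says $(\mathcal{M}^{\sf CS4},(w,i))\Vdash\Box\varphi$ iff $\kappa(\varphi)$ holds at every $v$ with $w(\peq;R^*)^*v$. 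By Lemma~\ref{AlternativeSemDiamondBox}, that is precisely $(\mathcal{M},w)\Vdash\nec\kappa(\varphi)$.

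The $\Diamond$ case is the delicate one; this is where the $0$-indexed copies and the quantification over every $i$ are needed.
\begin{itemize}
\item Suppose $(\mathcal{M},w)\Vdash\ps\kappa(\varphi)$, and let $(w',i')\seq^{\sf CS4}(w,i)$. Since $w'\seq w$, there is $v$ with $w'R^*v$ and $(\mathcal{M},v)\Vdash\kappa(\varphi)$. By Remark~\ref{Remark1}, $(w',i')R^{\sf CS4}(v,0)$ whatever $i'$ is, and the induction hypothesis gives $\varphi$ at $(v,0)$.
\item Conversely, suppose $(\mathcal{M}^{\sf CS4},(w,i))\Vdash\Diamond\varphi$, and let $w'\seq w$. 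Apply the hypothesis to the successor $(w',0)$. From a $0$-indexed world, $R^{\sf CS4}$ only reaches worlds $(v,0)$ with $w'R^*v$. Hence such a $v$ exists with $\varphi$ true at $(v,0)$, and by the induction hypothesis $\kappa(\varphi)$ holds at $v$.
\end{itemize}
Without the index-$0$ copy, the $1$-indexed relation would reach too many worlds and this direction would fail. This is the step I expect to be the main obstacle, and the choice of successor $(w',0)$ is the key move.
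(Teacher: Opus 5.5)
Your proposal is correct and follows essentially the same route as the paper: the model conditions are checked via Lemma~\ref{lemmaCS4frame} and the reflexivity of $R^{\sf CS4}$, and your $\Box$ case uses the same two inclusions (Remark~\ref{Remark1} in one direction, the detour $(w,i)\peq^{\sf CS4}(w,1)R^{\sf CS4}(v,j)$ in the other) together with Lemma~\ref{AlternativeSemDiamondBox}. Your explicit $\Diamond$ case, which passes to the $0$-indexed successor $(w',0)$, correctly fills in a case the paper leaves to the reader.
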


\begin{proof}
By Lemma~\ref{lemmaCS4frame}, the frame conditions are satisfied. It remains to verify the model conditions for $\mathcal{M}^{\sf CS4}$, which are readily inherited from those of $\mathcal{M}$ and left to the reader to verify. As an illustrative case, consider falsum seriality.
Since $R^{\sf CS4}$ is reflexive, it is also serial, and thus in particular it satisfies falsum seriality.

For the second part of the proof, we argue by induction on $\varphi \in \mathcal{L}$. We detail only the $\Box$-formula case, leaving the remaining ones to be easily verified by the reader.

Assume $(\mathcal{M}, w)\Vdash\nec\kappa(\varphi)$ and suppose 
$(w,i)\peq^{\sf CS4}(u,k)R^{\sf CS4}(v,j)$. 
Then, $w\peq u$ and $u(\peq;R^*)^*v$ (Remark~\ref{Remark1}), so we deduce that 
$w(\peq;R^*)^*v$. 
Thus $(\mathcal{M},v)\Vdash\kappa(\varphi)$, and by induction 
$(\mathcal{M}^{\sf CS4},(v,j))\Vdash\varphi$.

Now assume that $(\mathcal{M}^{\sf CS4}, (w,i)) \Vdash \Box \varphi$, and consider $v$ with $w(\peq;R^*)^*v$. By definition we have that $(w,i)\peq^{\sf CS4}(w,1)R^{\sf CS4}(v,j)$. 
Therefore, we obtain 
$(\mathcal{M}^{\sf CS4},(v,j))\Vdash\varphi$ by assumption, which implies 
$(\mathcal{M},v)\Vdash\kappa(\varphi)$ by the inductive hypothesis. Thus, $(\mathcal{M},w)\Vdash\nec\kappa(\varphi)$. \qedhere
\end{proof}

We conclude that the constructive variants of $\sf S4$ may be seen as the ``$*$-only" fragment of our master modality logics.

\begin{theorem}
\label{CS4iffCKStar}
If $\varphi\in\mathcal L$, then
$\Vdash_{\sf CS4} \varphi$ iff $\Vdash_{\sf CK^*} \kappa (\varphi)$ and $\Vdash_{\sf WS4} \varphi$ iff $\Vdash_{\sf WK^*} \kappa (\varphi)$.
\end{theorem}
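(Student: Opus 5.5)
The proof combines Proposition~\ref{ExactnessCS4toCK} (exactness) with Proposition~\ref{CorrectnessCS4toCK} (correctness), for both the constructive and the Wijesekera case.

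\emph{Left to right.} Suppose $\Vdash_{\sf CS4}\varphi$, and let $\mathcal M$ be any $\sf CK$-model with a world $w$. By Proposition~\ref{CorrectnessCS4toCK}, $\mathcal M^{\sf CS4}$ is a $\sf CS4$-model, so $(\mathcal M^{\sf CS4},(w,0))\Vdash\varphi$. The same proposition then gives $(\mathcal M,w)\Vdash\kappa(\varphi)$. Hence $\Vdash_{\sf CK^*}\kappa(\varphi)$.

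\emph{Right to left.} Suppose $\Vdash_{\sf CK^*}\kappa(\varphi)$, and let $\mathcal M$ be a $\sf CS4$-model with a world $w$. Every $\sf CS4$-model is a $\sf CK$-model, so $(\mathcal M,w)\Vdash\kappa(\varphi)$. By Proposition~\ref{ExactnessCS4toCK}, $(\mathcal M,w)\Vdash\varphi$.

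For $\sf WS4$ and $\sf WK^*$ the same two arguments go through, provided both constructions preserve infallibility. The right-to-left direction needs nothing new: a $\sf WS4$-model is an infallible $\sf CS4$-model, hence a $\sf WK$-model. For the left-to-right direction we must check that $\mathcal M^{\sf CS4}$ is infallible whenever $\mathcal M$ is. This holds because $\val\bot^{\sf CS4}=\pi^{-1}(\val\bot)=\pi^{-1}(\varnothing)=\varnothing$. So if $\mathcal M$ is a $\sf WK$-model, $\mathcal M^{\sf CS4}$ is a $\sf WS4$-model, and the argument above applies verbatim.

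The only point requiring care is this infallibility check. All the substantive work, namely confluence of $\mathcal F^{\sf CS4}$ and the $\Box$ case of the inductions, was already done in Lemma~\ref{lemmaCS4frame} and Propositions~\ref{ExactnessCS4toCK} and~\ref{CorrectnessCS4toCK}.
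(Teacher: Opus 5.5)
Your proposal is correct and follows the paper's own proof. Right-to-left uses Proposition~\ref{ExactnessCS4toCK}, since every $\sf CS4$-model is a $\sf CK$-model, and left-to-right uses Proposition~\ref{CorrectnessCS4toCK}; the $\sf WS4$/$\sf WK^*$ case follows by observing that $\mathcal M^{\sf CS4}$ inherits infallibility because $\pi^{-1}(\varnothing)=\varnothing$.
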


\begin{proof}
For $\sf CS4$ and $\sf CK^*$, the right to left implication is due to Proposition \ref{ExactnessCS4toCK}; while the inverse follows from Proposition \ref{CorrectnessCS4toCK}. 

For $\sf WK4$ and $\sf WK^*$, observe that we can extend Proposition \ref{CorrectnessCS4toCK} to show that $\mathcal{M}^{\sf CS4}$ is a $\sf WK4$-model given $\mathcal{M}$ a $\sf WK$-model. Indeed, it suffices to show infallibility, which is trivially inherited from the $\sf WK$-model by definition. Therefore, we can extend the argument to $\sf WK^*$ and $\sf WS4$.  \qedhere
\end{proof}

\subsection{Upper bounds for $\sf CS4$ and $\sf WS4$}

With this, Theorem~\ref{theoEXPTIME} immediately yields an {\sc ExpTime} upper bound for $\sf CS4$ and its Wijesekera variant, $\sf WS4$.
Unlike the case for constructive $\sf K$, this seems to improve on previously known upper bounds.

\begin{theorem}
    Validity for $\sf CS4$ and $\sf WS4$ is in {\sc ExpTime}.
\end{theorem}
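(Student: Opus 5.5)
The plan is to compose the embedding $\kappa$ of Theorem~\ref{CS4iffCKStar} with the {\sc ExpTime} decision procedures of Theorem~\ref{theoEXPTIME}. Given $\varphi\in\mathcal L$, we compute $\kappa(\varphi)$ by replacing each $\Box$ with $\nec$ and each $\Diamond$ with $\ps$. Since this is a symbol-by-symbol substitution, $|\kappa(\varphi)|=|\varphi|$, and the map is computable in linear time. By Theorem~\ref{CS4iffCKStar}, $\Vdash_{\sf CS4}\varphi$ iff $\Vdash_{\sf CK^*}\kappa(\varphi)$, and $\Vdash_{\sf WS4}\varphi$ iff $\Vdash_{\sf WK^*}\kappa(\varphi)$.

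We then run the {\sc ExpTime} procedure for ${\sf CK}^*$ (respectively ${\sf WK}^*$) from Theorem~\ref{theoEXPTIME} on $\kappa(\varphi)$. Unfolding that procedure, validity of $\kappa(\varphi)$ in ${\sf CK}^*$ amounts to validity of $\tau(\omega(\kappa(\varphi)))$ in $\sf PDL$, by Theorems~\ref{CorrespondenceConstructiveAndWijesekera} and~\ref{WKiffPDL}. The ${\sf WS4}$ case skips $\omega$. The size of $\tau(\omega(\kappa(\varphi)))$ is polynomial in $|\varphi|$ by Propositions~\ref{PolynomialOmega} and~\ref{PolynomialTranslation}. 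The $\sf PDL$ decision procedure of Theorem~\ref{ValidityPDL} runs in time $2^{p(n)}$ for some polynomial $p$ in the input size $n$. Substituting a polynomially bounded input size still gives time $2^{q(|\varphi|)}$ for a polynomial $q$, so the composition remains in {\sc ExpTime}.

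The only point needing care is this composition of polynomial reductions with an exponential-time procedure. The {\sc ExpTime} class is closed under polynomial-time many-one reductions, because a polynomial inside an exponent remains a polynomial exponent. No further obstacle is expected, since all semantic work is already carried out in Theorem~\ref{CS4iffCKStar}.
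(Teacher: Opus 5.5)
Your proposal is correct and follows the paper's proof, which likewise composes the embedding $\kappa$ of Theorem~\ref{CS4iffCKStar} with the {\sc ExpTime} bounds for ${\sf CK}^*$ and ${\sf WK}^*$ from Theorem~\ref{theoEXPTIME}. Your explicit accounting of the polynomial size of $\tau(\omega(\kappa(\varphi)))$, and your appeal to closure of {\sc ExpTime} under polynomial-time reductions, just spell out what the paper leaves implicit.
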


\begin{proof}
By Theorem \ref{CS4iffCKStar}, using the {\sc ExpTime}-validity of $\sf CK^*$ and $\sf WK^*$ (Theorem \ref{theoEXPTIME}). \qedhere
\end{proof}

\begin{remark}
    We also obtain the exponential model property for both logics, but this is already known for $\sf CS4$~\cite{BalbianiDF21} and the proof techniques preserve infallibility, so they also apply to $\sf WS4$.
\end{remark}

\section{Concluding remarks}
 
We have defined  constructive modal logics ${\sf CK}^*$ and ${\sf WK}^*$ with two master modalities and shown that they embed into $\sf PDL$, hence have the exponential model property and validity in {\sc ExpTime}.
By embedding ${\sf K}^*$ into ${\sf CK}^*_\Box$, we have seen that indeed ${\sf CK}^*_\Box$, ${\sf CK}^*$, and ${\sf WK}^*$ are all {\sc ExpTime}-complete.
We have also translated $\sf CS4$ and $\sf WS4$ into our master modality logics, thus also obtaining an {\sc ExpTime} upper bound.
The only lower bound we are currently aware of is {\sc PSpace}, inherited from the intuitionistic component~\cite{statman1979intuitionistic}.
Our conjecture is that these logics are in fact {\sc PSpace}-complete.

Finally, we remark that our logic is defined semantically and we have not provided a deductive calculus.
${\sf CK}^*_\Box$ does enjoy sound and complete calculi~\cite{NishimuraConstructivePDL,celani,AfshariGLZ24}, and we also leave the development of  Hilbert, Gentzen and/or cyclic calculi for ${\sf WK}^*$ as an open problem.

\appendix
\section*{Appendix}\label{appendixsec}

This appendix gathers the technical and straightforward proofs omitted from the main exposition.

\section{Proof of Proposition \ref{EqCKWKBox}}

\noindent\textbf{Proposition~\ref{EqCKWKBox}.} 
If $\varphi \in \mathcal{L}^*_\Box$, then $\varphi$ is valid in all $\sf CK$-models iff $\varphi$ is valid in all $\sf WK$-models.

\begin{proof}
We will prove the equivalent statement that 
$\varphi$ is falsifiable in some $\sf CK$-model iff $\varphi$ is falsifiable in some $\sf WK$-model. The $\Leftarrow$ direction is trivial and holds in general since any $\sf WK$ model is in particular a $\sf CK$ model. We therefore focus on the $\Rightarrow$ direction, where we do need to assume that $\varphi \in \mathcal{L}^*_\Box$. Thus, let $\Model=\tuple{\domai{\Model}, \peq^\Model, \R^\Model, \val\cdot^{\Model}}$ be a fixed model such that for some $x\in \domai{\Model}$ we have $\Model , x \nVdash \varphi$ for a formula $\varphi \in \mathcal{L}^*_\Box$ that we fix for the remainder of the proof. We have to exhibit an infallible model $\Model'=\tuple{\domai{\Model'}, \peq^{\Model'}, \R^{\Model'}, \val\cdot^{\Model'}}$ so that for some $x' \in \domai{\Model'}$ we have $\Model' , x' \nVdash \varphi$. To this end, we define $\domai{\Model'}:= \domai{\Model'}\setminus \{y\in \domai{\Model}\mid \Model,y\Vdash \bot\}$ 
and define $\peq^{\Model'}, \R^{\Model'}$
and $\val\cdot^{\Model'}$ to be the restriction 
to $\Model'$ of $\peq^{\Model}, \R^{\Model}$ and $\val\cdot^{\Model}$ respectively and we take $x'=x$ (since $\Model,x\nVdash \varphi$, clearly,  $\Model, x\nVdash \bot$, whence $x\in \domai{\Model'}$). By an easy induction one can now show that for any $\psi \in \mathcal{L}^*_\Box$ and any $y\in \domai{\Model'}$, we have $\mathcal{M}, y \Vdash \psi \ \Longleftrightarrow \ \mathcal{M}', y \Vdash \psi$ so that consequently $\Model ',x\nVdash \varphi$.  For atomic formulae this is trivial and $\vee, \wedge$ follow directly from the inductive hypothesis. Let us see the $\to$. The $\Rightarrow$ direction is immediate. For the other direction, we consider $\Model ', y\Vdash \eta \to \theta$. If we have $y R^{\Model}z$ with $\Model , z\Vdash \eta$, then $\Model , z\Vdash \theta$ follows from the induction hypothesis in the case $z\nVdash \bot$ and from \textit{ex-falso} otherwise. The $\Box$ and $\Box^*$ cases follow a similar reasoning. We just should take into account that if $u (\preceq;R)^* v$ and if for one of the points $w$ in this $(\preceq;R)^*$ chain from $u$ to $v$ we have $\Model, w\Vdash \bot$, then $\Model, v\Vdash \bot$.\qedhere 
\end{proof}

\section{Detailed proofs of the translation from $\sf CK^*$ to $\sf WK^*$}

\noindent\textbf{Proposition~\ref{PolynomialOmega}.} 
$|\omega(\varphi)|$ is polynomial on $|\varphi|$.

\begin{proof}
Since the translation is defined as a formula replacement, we can easily observe that 
\[
|\omega(\varphi)| = |\varphi| + k \cdot \big|\nec \big(\bigwedge P(\varphi) \wedge\Diamond p_\bot\big) \big| - k
\]
for $k$ the number of occurrences of $\bot$ in $\varphi$. Moreover, as we can bound 
\[
\big|\nec \big(\bigwedge P(\varphi) \wedge \Diamond p_\bot \big)\big| \leq 4 + |P(\varphi)| + |P(\varphi)| - 1 \leq 3 + 2 \cdot (|\varphi| + 1)
\]
and $k \leq |\varphi|$, we easily conclude that $|\omega(\varphi)|$ is quadratic on $|\varphi|$.
\end{proof}

\noindent\textbf{Lemma~\ref{IsModelWtoC}.}
If $\Model$ is a $\sf WK$-model and $\varphi \in \mathcal{L}^*$, then 
$\Model^{\sf CK}_\varphi$ is a ${\sf CK}$-model.

\begin{proof}
We verify that $\mathcal{M}^{\sf CK}_\varphi$ is a ${\sf CK}$-model by checking the model conditions; the frame conditions are preserved by construction.
\begin{enumerate}[wide, labelwidth=!, labelindent=0pt, itemsep=0pt]
\item[1.] We show that $\|\bot\|^{\sf CK}_\varphi \subseteq \|p\|^{\sf CK}_\varphi$ for every $p \in \mathrm{Prop}$. If $p \notin \mathrm{Var}(\varphi)$, this holds trivially, since $\|\bot\|^{\sf CK}_\varphi = \|\omega_{P(\varphi)}(\bot)\| = \|p\|^{\sf CK}_\varphi$. If $p \in \mathrm{Var}(\varphi)$, then by reflexivity, any $v \in \|\bot\|^{\sf CK}_\varphi = \|\nec (\bigwedge P(\varphi) \wedge \Diamond p_\bot)\|$ also belongs to $\|p\| = \|p\|^{\sf CK}_\varphi$.

\item[2.] Let $w \peq v$ with $w \in \|p\|^{\sf CK}_\varphi$. If $p \in \mathrm{Var}(\varphi)$, then $w \in \|p\|^{\sf CK}_\varphi = \|p\|$, and the atomic persistence of $\mathcal{M}$ implies $v \in \|p\| = \|p\|^{\sf CK}_\varphi$. If $p \notin \mathrm{Var}(\varphi)$, then $w \in \|p\|^{\sf CK}_\varphi = \|\nec(\bigwedge P(\varphi) \wedge \Diamond p_\bot)\|$, and the same reasoning as in condition (3) shows $v \in \|p\|^{\sf CK}_\varphi$, whose details are left to the reader.

 \item[3.] Let $w \in \|\bot\|^{\sf CK}_\varphi$ and suppose $w \peq v$ or $w R v$. By definition, $w \in \|\nec (\bigwedge P(\varphi) \wedge \Diamond p_\bot)\|$ implies that for all $u$, if $w (\peq; R^*)^* u$, then $u \in \|\bigwedge P(\varphi) \wedge \Diamond p_\bot\|$. In particular, this holds for $v$, so $v \in \|\nec (\bigwedge P(\varphi) \wedge \Diamond p_\bot)\| = \|\bot\|^{\sf CK}_\varphi$.

\item[4.] Finally, if $w \in \|\bot\|^{\sf CK}_\varphi = \|\nec (\bigwedge P(\varphi) \wedge \Diamond p_\bot)\|$, then by reflexivity $w \in \|\Diamond p_\bot\|$, so there exists some $v \in W$ with $w R v$. \qedhere
\end{enumerate}   
\end{proof}

\section{Detailed proofs of the translation from $\sf WK^*$ to $\sf PDL$}

\noindent\textbf{Proposition~\ref{propExactnessWKtoDPL}.}
Let $\Model$ be a $\sf WK$-model and $\varphi\in\mathcal L^*$.
Then, $(\Model,w) \Vdash\varphi$ iff $(\Model^{\sf PDL},w) \Vdash \tau (\varphi)$.

\begin{proof}
We proceed by structural induction on $\varphi \in \mathcal{L}^*$. Below, we detail the propositional base case, and the $\Box$ and $\nec$ cases together in one single argument. The remaining formulas follow straightforwardly from the definitions and the inductive hypotheses.
\begin{enumerate}[wide, labelwidth=!, labelindent=0pt, itemsep=0pt]
\item[($p \in \mathrm{Prop}$)]: Suppose $(\mathcal{M}^{\sf PDL}, w) \Vdash \tau(p) = [i^*] p$. By reflexivity, this immediately gives $(\mathcal{M}^{\sf PDL}, w) \Vdash p$, hence $(\mathcal{M}, w) \Vdash p$. Conversely, if $(\mathcal{M}, w) \Vdash p$, atomic persistence and the fact that $\peq$ is a preorder imply that $\forall v \, (w \peq^* v \to (\mathcal{M}, v) \Vdash p)$. By the definition of $\mathcal{M}^{\sf PDL}$, this entails $\forall v \, (w \rho(i^*) v \to (\mathcal{M}^{\sf PDL}, v) \Vdash p)$, which is precisely $(\mathcal{M}^{\sf PDL}, w) \Vdash [i^*] p$.

\item[($\blacksquare \varphi$, $\blacksquare \in \{\nec, \Box\}$)]: Let $\dag = *$ if $\blacksquare = \nec$ and $\dag = 1$ if $\blacksquare = \Box$. By the definitions, first-order reasoning, and the inductive hypothesis, we have
    \begin{equation*}
    \begin{split}
    (\mathcal{M}, w) \Vdash \blacksquare \varphi \ & \Longleftrightarrow \ \forall u. \ \big(w (\leq^* ; R)^\dag \ u \to (\mathcal{M}, u) \Vdash \varphi \big) \ \Longleftrightarrow \\ & 
    \Longleftrightarrow \ \forall u. \ \Big(w \big(\rho(i^*) ; \rho(m)\big)^\dag u \to (\mathcal{M}^{\sf PDL}, u) \Vdash \tau(\varphi) \Big) \ \Longleftrightarrow 
    \\ & 
    \Longleftrightarrow \ \forall u. \ \Big(w \rho \big((i^* ; m)^\dag\big) u \to (\mathcal{M}^{\sf PDL}, u) \Vdash \tau(\varphi) \Big) \ \Longleftrightarrow \\ & \Longleftrightarrow \ (\mathcal{M}^{\sf PDL}, w) \Vdash [(i^*;m)^\dag ]  \tau(\varphi) = \tau(\blacksquare \varphi)
    \end{split}
    \end{equation*}
\end{enumerate}
\end{proof} 

\noindent\textbf{Proposition~\ref{propCorrectnessWKtoPDL}.}
If $\mathfrak{M}$ is a $\sf PDL$-model over the set of atomic programs $\Pi_0 = \{i,m\}$, then $\mathfrak{M}^{\sf WK}$ is a  $\sf WK$-model and for any formula $\varphi \in \mathcal{L}^*$ and $w\in W^\mathfrak{M}$, $
(\mathfrak{M}^{\sf WK},w)\Vdash  \varphi $ iff $(\mathfrak{M},w)\Vdash \tau(\varphi)$.

\begin{proof}
To show that $\mathfrak{M}^{\sf WK}$ is a $\sf WK$-model, it is enough to check that the defining conditions hold, and we omit the straightforward verification. As an illustration, we spell out the argument for atomic persistence. Hence, assume $w \rho(i^*) v$ and $w \in \|p\|^{\sf WK} = \|[i^*] p\|$ for some $p \in \mathrm{Prop}$. By definition and reflexivity, $v \in \|p\| \subseteq \|[i^*] p\|$, so $v \in \|p\|^{\sf WK}$ as required.

For the second part, we proceed by an easy induction on $\varphi \in \mathcal{L}^*$. We detail the case of the diamond and master-diamond modalities in a single argument. The remaining cases follow straightforwardly from the definitions and the inductive hypotheses.
\begin{enumerate}[wide, labelwidth=!, labelindent=0pt, itemsep=0pt]
    \item[($\blacklozenge \varphi$, $\blacklozenge \in \{\ps, \Diamond\}$)]: Let $\dag = *$ if $\blacklozenge = \ps$ and $\dag = 1$ if $\blacklozenge = \Diamond$. By definition and the inductive hypothesis, we show that
    \begin{equation*}
    \begin{split}
    (\mathfrak{M}^{\sf WK}, w) \Vdash \blacklozenge \varphi \ & \Longleftrightarrow \ \forall v. \ \Big(w \rho(i^*) v \to \exists u. \ \big(v (\rho(m))^\dag u \wedge (\mathfrak{M}^{\sf WK}, u) \Vdash \varphi\big)\Big) \ \Longleftrightarrow \\ & \Longleftrightarrow \ \forall v. \ \Big(w \rho(i^*) v \to \exists u. \ \big(v \rho(m^\dag) u \wedge (\mathfrak{M}, u) \Vdash \tau(\varphi)\big)\Big) \ \Longleftrightarrow \\ & \Longleftrightarrow \ (\mathfrak{M}, w) \Vdash [i^*] \langle m^\dag \rangle t(\varphi) = t(\blacklozenge \varphi).
    \end{split}
    \end{equation*}
\end{enumerate}
\end{proof}

\section{Detailed proofs of the translation from $\sf K^*$ to $\sf CK^*$}

\noindent\textbf{Proposition~\ref{exactnessKCK}.}
Let $\mathfrak{M} = \tuple{W, R, \|\cdot\|}$ be a $\sf K$-model. Define $\mathfrak{M}^{\sf CK} := \tuple{W, \mathrm{Id}, R, \|\cdot \|^{\sf CK}}$ where $\mathrm{Id}$ is the identity relation and $\|p\|^{\sf CK} := \|p\|$ for $p \in \mathrm{Prop}$, and $\|\bot\|^{\sf CK} := \emptyset$.
Then, $\mathfrak{M}^{\sf CK}$ is a $\sf CK$-model and for every $w \in W$ and every $\varphi \in \mathcal{L}^{\sf K^*}$,
$\mathfrak{M}, w \Vdash \varphi$ if and only if $\mathfrak{M}^{\sf CK}, w \Vdash \varphi$.

\begin{proof}
By a straightforward verification of the model conditions, $\mathfrak{M}^{\sf CK}$ is a ${\sf CK}$-model; the details are omitted. As an illustration, we spell out the argument for atomic persistence: if $w \in \|p\|^{\sf CK}$ for $p \in \mathrm{Prop}$ and $w \ \mathrm{Id} \ v$, the result trivially holds by the assumption since $w = v$.

For the second part, we proceed by induction on $\varphi \in \mathcal{L}^{\sf K^*}$. We only discuss the case of negation, as the other cases follow immediately from the definition and the inductive hypothesis.

First, assume $\mathfrak{M}, w \Vdash \neg \varphi$. By definition, $\mathfrak{M}^{\sf CK}$ is infallible and the intuitionistic relation is the identity, so $\varphi \to \bot$ holds precisely when $\mathfrak{M}^{\sf CK}, w \not\Vdash \varphi$, which is given by the inductive hypothesis.

Conversely, assume $\mathfrak{M}^{\sf CK}, w \Vdash \varphi \to \bot$. If $\mathfrak{M}, w \Vdash \varphi$, then by the inductive hypothesis, $\mathfrak{M}^{\sf CK}, w \Vdash \varphi$. Together with the assumption, this yields the contradiction $\mathfrak{M}^{\sf CK}, w \Vdash \bot$, concluding the proof. \qedhere
\end{proof}

\end{document}